\theoremstyle{definition}
\newtheorem{prop}{Proposition}
\renewcommand{\algorithmicrequire}{ \textbf{Input:}}      
\renewcommand{\algorithmicensure}{ \textbf{Output:}}
\begin{document}
%
\title{\Huge Coordinated RSMA for Integrated Sensing and Communication in Emergency UAV Systems}
\author{
	\IEEEauthorblockN{Binghan Yao, Ruoguang Li, \IEEEmembership{Member, IEEE}, Yingyang Chen,  \IEEEmembership{Senior Member, IEEE}, and Li Wang,~\IEEEmembership{Senior Member,~IEEE}\\}
	\thanks{
		This work was supported in part by the National Natural Science Foundation of China under Grant U2066201, 62301157, and 62171054, in part by the Natural Science Foundation of Jiangsu Province of China under Project BK20230823, in part by the Fundamental Research Funds for the Central Universities under Grant 24820232023YQTD01, in part by the Double First-Class Interdisciplinary Team Project Funds under Grant 2023SYLTD06, and in part by the Guangdong Basic and Applied Basic Research Project under Grant 2024B1515020002, 2023A1515012892, and 2021B1515120067. \emph{(Corresponding author: Li Wang.)}
		
		Binghan Yao and Li Wang are with the School of Computer Science (National Pilot Software Engineering School), Beijing University of Posts and Telecommunications, Beijing 100876, China (e-mail: 
		\{yaobinghan, liwang\}@bupt.edu.cn).
		
		Ruoguang Li is with the College of Information Science and Engineering, Hohai University, Changzhou 213200, China (e-mail: ruoguangli@hhu.edu.cn).
		
		Yingyang Chen is with the College of Information Science and Technology, Jinan University, Guangzhou 510632, China (e-mail: chenyy@jnu.edu.cn).
	}
}
%
%
%
\maketitle

\begin{abstract}
The destruction of terrestrial infrastructures and wireless resource-scarcity in disaster scenarios pose challenges for providing a prompt and reliable communication and sensing (C\&S) services to search and rescue operations. Recently, unmanned aerial vehicle (UAV)-enabled integrated sensing and communication (ISAC) is emerging as a promising technique for achieving robust and rapid emergency response capabilities. Such a novel framework offers high-quality and cost-efficient C\&S services due to the intrinsic flexibility and mobility of UAVs. In parallel, rate-splitting multiple access (RSMA) is able to achieve a tailor-made communication by splitting the messages into private and common parts with adjustable rates, making it suitable for on-demand data transmission in disaster scenarios. In this paper, we propose a coordinated RSMA for integrated sensing and communication (CoRSMA-ISAC) scheme in emergency UAV system to facilitate search and rescue operations, where a number of ISAC UAVs simultaneously communicate with multiple communication survivors (CSs) and detect a potentially trapped survivor (TS) in a coordinated manner. Towards this end, an optimization problem is formulated to maximize the weighted sum rate (WSR) of the system, subject to the sensing signal-to-noise ratio (SNR) requirement. In order to solve the formulated non-convex problem, we first decompose it into three subproblems, i.e., UAV-CS association, UAV deployment, as well as beamforming optimization and rate allocation. Subsequently, we introduce an iterative optimization approach leveraging K-Means, successive convex approximation (SCA), and semi-definite relaxation (SDR) algorithms to reframe the subproblems into a more tractable form and efficiently solve them. Simulation results demonstrate that the proposed CoRSMA-ISAC scheme is superior to conventional space division multiple access (SDMA), non-orthogonal multiple access (NOMA), and orthogonal multiple access (OMA) in terms of both communication and sensing performance.

\end{abstract}
\begin{IEEEkeywords}
	Integrated sensing and communication (ISAC), rate-splitting multiple access (RSMA), UAV deployment, UAV-CS association, beamforming.
 
\end{IEEEkeywords}
\section{Introduction}
Disaster scenarios generally require a prompt emergency response, which demands reliable and uninterrupted wireless communication and sensing (C\&S) services to facilitate the transmission of rescue tasks and the detection of trapped survivors (TSs) \cite{D2019,EKMarkakis2017,WL2020edge}. However, the conventional terrestrial infrastructure is often out of work after disasters due to the severe destruction. In particular, the obstacles such as mountains and buildings may block the line-of-sight (LoS) links between the transmitters and receivers, resulting in a seriously degraded performance of on-site and timely C\&S services. Driven by the flexible mobility and on-demand connectivity, unmanned aerial vehicle (UAV) is envisioned as an aerial multi-functional platform that can be harnessed for various applications in emergency events \cite{LGupta2016,YZeng2019,Zhao2019}, such as disaster warnings broadcasting, medical supplies delivery, survivor/environemt status monitoring, etc. Specifically, by exploiting strong LoS links, emergency UAV system can provide enhanced wireless C\&S services from sky for rescuers and survivors with the extended coverage, higher capacity, and quicker deployment in rescue operations \cite{WL2023}. Extensive researches have mainly focused on the emergency UAV-enabled C\&S network for disaster scenarios from the perspectives of framework design, trajectory design, resource allocation, and multi-UAV deployment, etc \cite{ZYao2021,TDo-Duy2021,SWu2022,NLin2022,WL2021Joint3D}. 

However, the separate deployment of C\&S services inevitably incurs a heavy payload for emergency equipment, especially for that with limited size and available battery power, such as UAV \cite{WL2017}. Therefore, by introducing integrated sensing and communication (ISAC) technology to unify wireless communication and radar sensing functionalities into the emergency UAV system,  a  better adaption to search and rescue operations can be achieved in disaster scenarios with higher resource efficiency \cite{Fliu2022,JAZhang2022,JYang2023,ZWei2023}. Recently, emergency UAV-enabled ISAC systems have attracted attention in both academia and industry, cost-efficiently improving C\&S performance with reused resource and payload reduction\cite{JMu2023,LWang2023,WL2023Joint,KMeng2023,KMeng20232,XJing2022,KZhang2021}. For instance, \cite{KMeng2023} gave an overview of UAV-enabled ISAC system, including the basic network architecture and technical issues. The authors in \cite{KMeng20232} proposed an integrated periodic sensing and communication (IPSAC) mechanism, which flexibly provided C\&S services for multiple users equipment (UEs) and targets, and jointly optimized the UAV trajectory, user association, sensing target selection, as well as the transmit beamforming to maximize the achievable communication rate. The work \cite{XJing2022} investigated the trajectory design for a UAV which provided ISAC service to jointly optimize the downlink communication rate and localization accuracy. Besides, the freshness of sensed data collected by the UAV in an ISAC system was quantified by the peak age of information (PAoI) in \cite{KZhang2021}, and a joint UAV trajectory, target sensing scheduling, and resource allocation optimization problem was formulated to guarantee the PAoI of the system. Several recent works have considered the multi-UAV ISAC system \cite{KMeng20233,pan2023cooperative,GC2024}. For example, \cite{KMeng20233} proposed a cooperative time-division based sensing and communication scheme and optimized the sensing task allocation. The authors in \cite{pan2023cooperative} proposed an orthogonal frequency division multiple access (OFDMA) UAV-enabled ISAC system and designed a joint trajectory planning and resource allocation scheme. \cite{GC2024} investigated the problem of maximizing the average sum rate of UAVs in the scenario where multiple UAVs communicate with ground base stations and utilize echo signals for detecting functions. However, it is rare to see the mechanism design and applicable technique utilization for emergency UAV-enabled ISAC system, which highly emphasizes the reliability, effectiveness, and efficiency. 

On the other hand, due to the scarcity of spectrum resources, traditional orthogonal multiple access (OMA) struggles to support massive communication in emergency rescue scenarios. Therefore, exploiting multiple-input multiple-output (MIMO) and multiple access techniques with non-orthogonal spectrum has progressed towards the direction of spatial division multiple access (SDMA) and non-orthogonal multiple access (NOMA) into emergency networks \cite{ZXiao2020,SFu2023}. During the emergency events, however, it is worth noting that the transmitted messages are variant with different priority. For example, weather notifications are often broadcasted to all users, while specific rescue instructions are uniquely required by certain users. Such an information demand diversity cannot be simply achieved via SDMA or NOMA that have a weak transmission flexibility in terms of encoding and decoding mechanism. Rate-splitting multiple access (RSMA) is a promising technology to deal with the above issue with higher design flexibility, which allows transmitter to send a superposition of multiple signals to the receiver by splitting the messages into common and private parts with adjustable rates \cite{YMao2022}. Particularly, the common parts are encoded into \emph{common streams} that are decoded by multiple users, while the private parts are independently encoded into the \emph{private streams} that are decoded by the corresponding users only via successive interference cancellation (SIC). Thus, the common parts can transmit some public information such as weather condition and early warning, while the private parts can transmit the intended information of a certain user. Meanwhile, by adjusting the proportion of commonn streams flexibly, RSMA enables more flexible interference management. Due to this characteristics, RSMA is capable of partially decoding the interference while partially treating the remaining interference as noise, which contrasts with SDMA that fully treats interference as noise and NOMA that fully decodes the interference \cite{GZheng2023}. Several literatures have investigated the potential combination of RSMA and ISAC \cite{LYin2022,CXu2021,YLi2021,TT2023}. Specifically, the authors in \cite{LYin2022} proposed a RSMA-assisted ISAC waveform design by jointly minimizing the Cramér-Rao bound (CRB) of the target detection and maximizing the minimum fairness rate (MFR) amongst UEs. In \cite{CXu2021}, the authors considered the optimal transmit beamforming of the communication and radar signals in a multi-antenna RSMA ISAC system. Furthermore, a single UAV-assisted RSMA ISAC system was investigated in \cite{YLi2021}, where the energy efficiency was maximized by optimizing the latitude and transmit beamforming. In \cite{TT2023}, the authors proposed a RSMA-based communication and radar coexistence (CRC) system which significantly improved spectral efficiency, energy efficiency, and quality of service (QoS) of communication users. However, the aforementioned efforts primarily focused on the RSMA ISAC system with a single transceiver, ignoring the performance gain in terms of cooperative sensing and communication, remaining a gap in research and discussion regarding the coordinated RSMA for ISAC with multiple UAVs.

Inspired by the aforementioned analysis, in this paper, considering the diversity of information, a coordinated RSMA ISAC (CoRSMA-ISAC) in emergency UAV system is studied to achieve simultaneous cooperative data transmission with communication survivors (CSs) and target detection for a trapped survivor (TS). Our goal is to improve the communication performance while guaranteeing the sensing requirement by exploiting the transmission flexibility of RSMA and cooperation gain via multiple UAVs. Towards this end, a joint UAV-CS association, UAV deployment, and beamforming optimization problem is formulated to maximize the WSR of the system, subject to the sensing requirement at the TS. The main contributions of this paper are summarized as follows:
\begin{itemize}
	\item First, we propose a CoRSMA-ISAC framework in emergency UAV system where multiple UAVs aim to provide ISAC services for emergency rescue. We first analyze the expression of communication and sensing signals, and derive the corresponding performance metrics, i.e., received sensing signal-to-noise ratio (SNR) and communication signal-to-interference-plus-noise ratio (SINR). To maximize the WSR while satisfying the sensing requirement of TS, a joint deployment and beamforming optimization problem is formulated with the sensing SNR constraint for each CS.
	\item Second, in order to solve the formulated non-convex optimization problem, we decompose the problem into three subproblems, i.e., UAV-CS association, UAV deployment optimization, and beamforming and rate allocation, respectively. Specifically, we first use the K-Means to determine the optimal UAV-CS association. For the UAV deployment, we propose an efficient algorithm utilizing the successive convex approximation (SCA) technique. Moreover, to address the beamforming and rate allocation subproblem, we equivalently transform it into a semi-definite programming (SDP) problem, which is solved by semi-definite relaxation (SDR) and SCA techniques.
	\item Finally, numerical results demonstrate that the proposed design is able to achieve better WSR compared with SDMA, NOMA, and OMA for CSs while guaranteeing the sensing requirement of TS. Furthermore, the common message transmission via different UAVs with coordinated RSMA also provides extra sensing performance gain. Especially, when the total usable transmit power is insufficient or sensing requirements are high, the proposed CoRSMA-ISAC is still able to ensure the C\&S performance by increasing the ratio of common rate.
\end{itemize}

The rest of this paper is organized as follows. Section \ref{SystemModel} presents the system model for the proposed CoRSMA-ISAC in emergency UAV system and optimization problem formulation. Section \ref{ProposedSolution} proposes the algorithm for WSR maximization in CoRSMA-ISAC. Simulation results are presented in Section \ref{SimulationResults}. Moreover, Section \ref{Conclusion} provides the concluding remarks.

\emph{Notations:} In this paper, scalars are denoted by italic letters. Vectors and matrices are denoted by boldface lower and uppercase letters, respectively. For a vector $\mathbf{a}$, its Euclidean norm is denoted as $\|\mathbf{a}\|$. For a matrix $\mathbf{M}$, ${\rm rank}(\mathbf{M})$, $\operatorname{tr}\left(\mathbf{M}\right)$, $\mathbf{M}^T$, $\mathbf{M}^{H}$, $\|\mathbf{M}\|_{\mathrm{F}}$, and $\left[\mathbf{M}\right]_{p,q}$ denote its rank, trace, transpose, conjugate transpose, Frobenius norms, and the element in the $p$-th row and $q$-th column, respectively. Besides, $\mathbf{M}\succeq \mathbf{0}$ represents that $\mathbf{M}$ is a semi-positive definite matrix.  For a complex scalar $a$, its conjugate and magnitude are denoted by $a^{*}$ and $|a|$, respectively. $\mathbb{E}[\mathbf{x}]$ is the expectation of $\mathbf{x}$. For the sake of clarity, a main notation list is included in Table \ref{notation}.
 \begin{table}[h]\label{notation}
 	\centering
	\caption{List of Main Notations}
	\begin{center}
		\begin{tabular}{m{0.3\columnwidth}<{\centering}m{0.6\columnwidth}}
			\toprule[2pt]  
			 \textbf{Notation}    & \textbf{Definition}\\
			\hline $\mathcal{U}$, $\mathcal{K}$    & The set of all the ISAC UAVs and all the CSs.\\
			\hline $\mathcal{O}$, $\mathcal{Q}$ & The set of ISAC UAV position and CS position.\\
			\hline $\mathbf{o}_u$, $\mathbf{q}_k$ & The horizontal coordinate of UAV $u$ and CS $k$.\\
			\hline $ \mathcal{J}_u$ & The CS cluster in which each CS is associated with UAV $u$\\
			\hline $\mathbf{p}_{u,c}$, $ \mathbf{p}_{u,k,p}$, $ \mathbf{p}_{u,r}$ & The beamforming vector for transmitting common message, private message to CS $k$ and sensing waveform at UAV $u$.\\
			\hline $\mathbf{h}_{u,k}$ & The channel corfficient vector  beamforming vector between the UAV $u$ and CS $k$.\\
			\hline $r\left(\mathbf{o}_u,\mathbf{q}_k\right)$, $r\left(\mathbf{o}_u,\mathbf{q}_0\right)$, $r\left(\mathbf{o}_0,\mathbf{q}_0\right)$ & The distance from UAV $u$ to CS $k$, from UAV $u$ to TS, and from receive UAV to TS.\\
			\hline $\mathbf{P}_u$  & The beamforming matrix of UAV $u$, $\mathbf{P}_u=\left[\mathbf{p}_{u,c}, \mathbf{p}_{ u, 1,p}, \ldots, \mathbf{p}_{ u, K,p}, \mathbf{p}_{u,r}\right] \in \mathbb{C}^{N_t \times(K+2)}$ \\
			\hline $\mathbf{P}_{u,c}$, $\mathbf{P}_{u, k,p}$, $\mathbf{H}_{u, k}$ & Semi positive definite matrix, where $\mathbf{P}_{u,c}=\mathbf{p}_{u,c}\mathbf{p}_{u,c}^H$, $\mathbf{P}_{u, k,p}=\mathbf{p}_{u, k,p}\mathbf{p}_{u, k,p}^H$,$\mathbf{H}_{u, k}=\mathbf{h}_{u, k}\mathbf{h}_{u, k}^H$ \\
			\hline $R^p_k$ & The achievable private rate of CS $k$. \\
			\hline $C_k$  & The common rate allocation variable, represents the allocated common rate to CS $k$\\
			\bottomrule[2pt]
		\end{tabular} 
	\end{center}
	\label{table:notations}
\end{table}

\section{System Model and Problem Formulation}\label{SystemModel}
\subsection{System Model}\label{system_model}
\begin{figure}[t]
	\center
	\includegraphics[width=3.3in]{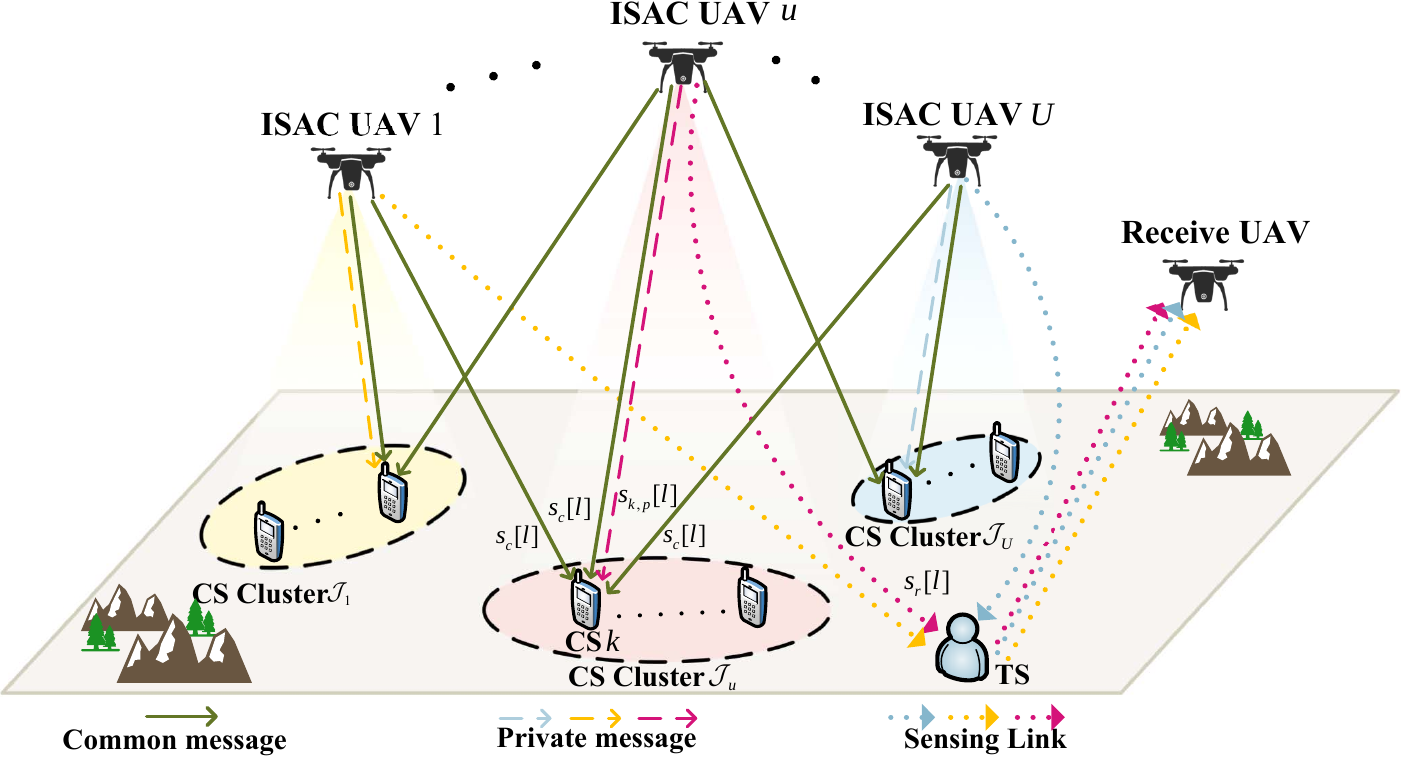}\\
	\caption{Coordinated RSMA for integrated sensing and communication (CoRSMA-ISAC) in emergency UAV system.}\label{Fig_1}
\end{figure}
As illustrated in Fig. \ref{Fig_1}, we consider a CoRSMA-ISAC framework in emergency UAV system to accomplish a post-disaster rescue mission, which consists of $U$ ISAC UAVs and a dedicated radar receive UAV \cite{Li2023}. The system provides downlink transmissions for $K$ CSs, while cooperatively detecting a single TS whose mobile device is out of work\footnote{We assume that all the clutters can be mitigated by using the existing clutter suppression
techniques \cite{XChen2020}, such that we do not consider the influence of the clutters.}. The sets of ISAC UAVs and CSs are denoted by $\mathcal{U}=\{1,\cdots,U\}$ and $\mathcal{K}\in\{1,\cdots,K\}$, respectively. We assume that each ISAC UAV is equipped with $N_t$ transmit antennas and sensing receive UAV is equipped with $N_r$ receive antennas, while each CS is equipped with single antenna. We further consider a three-dimensional (3D) Cartesian coordinate system, where the horizontal coordinate of CS $k, k\in\mathcal{K}$ and TS are fixed at $\mathbf{q}_k = (x_{k,q}, y_{k,q})$ and $\mathbf{q}_0 = (x_{0,q}, y_{0,q})$, respectively. The ISAC UAV $u$, $u\in\mathcal{U}$ and receive UAV are assumed to fly at fixed altitude $H_u$ and $H_0$, respectively \cite{Wu2018jointtraj}. Let us further denote $\mathbf{o}_u = (x_{u,o}, y_{u,o})$ and $\mathbf{o}_0 = (x_{0,o}, y_{0,o})$, as their horizontal location, respectively. The sets of horizontal coordinates of ISAC UAV and CS can be denoted as $\mathcal{O}=\{\mathbf{o}_1 ,\cdots,\mathbf{o}_U\}$ and $\mathcal{Q}=\{\mathbf{q}_1 ,\cdots,\mathbf{q}_K\}$, respectively. 

In order to achieve dual purposes of communication and sensing, coordinated RSMA is adopted. Specifically, the communication messages for CS $k$ are split into two parts by each UAV, i.e., a \emph{common part} that carries the public information needed by all CSs, such as weather conditions, rescue situations, etc.,  and a \emph{private part} that is exclusively desired by CS $k$. We denote $ \mathcal{J}_u$ as the CS cluster containing all the CSs associated with UAV $u$, in which if CS $k \in \mathcal{J}_u$, CS $k$ will only receive the private message from UAV $u$, i.e., $\mathcal{J}_u\cap \mathcal{J}_j=\emptyset, \forall u,j \in \mathcal{U}, u\neq j$. Additionally, we assume that the ISAC signal transmission over a particular block consisting of $L$ symbols, and the set of symbols is denoted as $\mathcal{L}=\{1, \cdots, L\}$. The messages for CSs are jointly processed at the central controller and the optimized transmit signals are sent to the corresponding UAVs \cite{YM2018}. For the $l$-th symbol, the common parts of all $K$ CSs are jointly encoded into a single common message $s_c[l]$ that is transmitted by all ISAC UAVs, and the private parts are encoded separately into $K$ private messages $s_{k,p}[l],\forall k\in{\mathcal{K}}$. Furthermore, to fully exploit the spatial degree of freedom (DoF) of the transmit antennas, a dedicated sensing waveform $s_{r}[l]$ is introduced to enhance the sensing performance. In this system model, $s_c[l]$, $s_{k,p}[l]$, and $s_{r}[l]$ are all assumed to have zero mean and unit power, i.e., $\mathbb{E}\left\{\left|s_c[l]\right|^2\right\}=\mathbb{E}\left\{\left|s_{k,p}[l]\right|^2\right\}=\mathbb{E}\left\{\left|s_{r}[l]\right|^2\right\}=1$, and they are independent from each other \cite{ZB2023}. Then the baseband transmitted signal from ISAC UAV  $u$ can be expressed as 
\begin{equation}
	\mathbf{x}_u[l]=\mathbf{p}_{u,c} s_c[l]+\sum_{k \in \mathcal{J}_u} \mathbf{p}_{u,k,p} s_{k,p}[l]+\mathbf{p}_{u,r} s_{r}[l],
\end{equation}
where $\mathbf{p}_{u,c}\in \mathbb{C}^{N_t \times 1}$, $ \mathbf{p}_{u,k,p} \in \mathbb{C}^{N_t \times 1}$, and $ \mathbf{p}_{u,r}\in \mathbb{C}^{N_t \times 1} $ are the corresponding beamforming vector, respectively.

\subsection{Communication Model}

Let $\mathbf{h}_{u,k}=\left[h_{u, k}^1,\dots,h_{u, k}^n,\dots,h_{u, k}^{N_t}\right]^T \in \mathbb{C}^{N_t \times 1}$ denote the channel coefficient vector between the UAV $u$ and CS $k$. Specifically, $h_{u, k}^n, n\in\{1,\cdots,N_t\},$ is the cofficient between the $n$-th antenna of the UAV $u$ and CS $k$, which can be modeled as \cite{YZeng2019}
\begin{align}
h_{u, k}^n=\sqrt{\varepsilon_{u,k}}\tilde{h}_{u, k}^n,\forall n\in\{1,\cdots,N_t\},
\end{align}
where $\tilde{h}_{u, k}^n$ denotes the small-scale fading due to the multipath propagation with $\mathbb{E}[|\tilde{h}_{u, k}^n|^2]=$ 1 \cite{pan2023cooperative}. $\varepsilon_{u,k}$ is the larger-scale channel power, which is given by
\begin{align}
\varepsilon_{u,k}=\frac{\varepsilon_0}{r^2\left(\mathbf{o}_u,\mathbf{q}_k\right)},
\end{align}
where $r\left(\mathbf{o}_u,\mathbf{q}_k\right)=\sqrt{\|\mathbf{o}_u-\mathbf{q}_k\|^2 +{H_u}^2}$ represents the distance between UAV $u$ and CS $k$. $\varepsilon_0\triangleq\frac{G_\mathrm{T} G_\mathrm{C}\lambda^2}{(4\pi)^2}$ denotes the channel power at the reference distance of 1 meter  \cite{XJing2022}, where $G_{\mathrm{T}}$ and $G_\mathrm{C}$ are the transmit and receive antenna gain of the CS, respectively. $\lambda$ is the wavelength. Therefore, the received signal at CS $k\in\mathcal{K}$ is given by 
\begin{align}
	&y_k[l]=\underbrace{\sum_{u \in \mathcal{U}} \mathbf{h}_{u, k}^H \mathbf{p}_{u,c} s_c[l]}_{\text{Desired common signal}}+\underbrace{\mathbf{h}_{u, k}^H \mathbf{p}_{u, k,p} s_{k,p}[l]}_{\text{Desired private signal}}\nonumber\\
	&+\underbrace{\sum_{j \in \mathcal{J}_u \backslash \{k\}} \mathbf{h}_{u, j}^H \mathbf{p}_{u, j,p} s_{j,p}[l]+\sum_{i \in \mathcal{U} \backslash \{u\}}\sum_{j \in \mathcal{J}_i} \mathbf{h}_{i, j}^H \mathbf{p}_{i, j,p} s_{j,p}[l]}_{\text{Multi-user intra and intercell Interference}}\nonumber\\
	&+\underbrace{\mathbf{h}_{u, k}^H \mathbf{p}_{u,r}s_{r}[l]}_{\text{Sensing interference}}+\underbrace{n_k[l]}_{\text{Noise}},
\end{align}
where $n_k[l]$ represents the additive white Gaussian noise (AWGN) with zero mean and variance $\sigma^2$. Since the dedicated radar waveform $s_{r}[l]$ can be a prior known to CS $k$, it can be removed from the received signal \cite{Li2023}. Following the decoding order of RSMA, each CS first decodes the common message $s_c[l]$ by treating other streams as noise \cite{YMao2022}. Specifically, the received SINR of decoding the common message $s_c[l]$ at CS $k$ can be expressed by
\begin{equation}\label{rc}
		\gamma_k^c=\frac{\left|\sum_{u \in \mathcal{U}}\mathbf{h}_{u, k}^H \mathbf{p}_{u,c}\right|^2}{ \sum_{u \in \mathcal{U}} \sum_{j \in \mathcal{J}_u}  \left|\mathbf{h}_{u, k}^H \mathbf{p}_{u, j,p}\right|^2+\sigma^2},
\end{equation}
and the corresponding achievable rate is given by
\begin{equation}
	R_k^c=B\log _2\left(1+\gamma_k^c\right),
\end{equation}
where $B$ is the channel bandwidth. Besides, to guarantee that all CSs are capable of decoding the common stream, the common rate is defined as
\begin{equation}
	R^c=\min _k\left\{R_k^c \mid k \in \mathcal{K}\right\}.
\end{equation}
Since $s_c[l]$ contains the common messages for $K$ CSs, $R^c$ is accordingly shared by $ K $ users. We denote the variable $C_k$ as the allocated common rate to CS $k$. Then we have
\begin{equation}
	\sum_{k \in \mathcal{K}} C_k\leq R^c.
\end{equation}
 
After successfully decoding and removing the common message, the SINR for CS $k$ to decode its own private message can be written as \eqref{rp} on the top of next page.
\begin{figure*}
	\begin{equation}\label{rp}
		\gamma_k^p=\frac{\left|\mathbf{h}_{u, k}^H \mathbf{p}_{u, k,p}\right|^2}{\sum_{j \in \mathcal{J}_u \backslash \{k\}} \left|\mathbf{h}_{u, k}^H \mathbf{p}_{u, j,p}\right|^2+ \sum_{i \in \mathcal{U} \backslash \{u\}} \sum_{j \in \mathcal{J}_i} \left|\mathbf{h}_{i, k}^H \mathbf{p}_{ i, j,p}\right|^2+\sigma^2}.
	\end{equation}
\hrulefill
\end{figure*}
And the corresponding achievable private rate is
\begin{equation}
	R_k^p=B\log _2\left(1+\gamma_k^p\right).
\end{equation}
Then the total achievable rate of CS $k$ is equal to the summation of the allocated common rate and the private rate, which can be expressed as
\begin{equation}\label{Rp}
	R^{tot}_k=C_k+R_k^p,\forall k \in \mathcal{K}.
\end{equation}
Thus, the WSR of the system is written by
\begin{equation}
	 R^{w} = \sum_{k\in\mathcal{K}}\mu_kR^{tot}_k = \sum_{k \in \mathcal{K}} \mu_k\left(C_k+R_k^p\right),
\end{equation}
where $\mu_k\in[0,1]$ denotes the rate weight allocated to CS $k$ with $\sum_{k=1}^{K}\mu_k=1$. 

\subsection{Sensing Model}

For the sensing model, we assume that TS is a point-like target. Therefore, the channel matrix between the ISAC UAV $u$ and the receive UAV through the TS can be defined as
\begin{equation}\label{Gu}
	\mathbf{G}_u=\beta_u  \mathbf{b}\left(\mathbf{o}_0,\mathbf{q}_0\right) \mathbf{a}^H\left(\mathbf{o}_u,\mathbf{q}_0\right),
\end{equation}
where $\beta_u$ represents the total sensing channel power gain, which is expressed as
\begin{equation}
	\beta_u=\sqrt{\frac{\beta _0}{r^2\left(\mathbf{o}_u,\mathbf{q}_0\right)r^2\left(\mathbf{o}_0,\mathbf{q}_0\right)}},
\end{equation} 
where $r\left(\mathbf{o}_u,\mathbf{q}_0\right)=\sqrt{\left\|\mathbf{o}_u-\mathbf{q}_0\right\|^2 +{H_u}^2}$ denotes the distance between UAV $u$ and TS, and $r\left(\mathbf{o}_0,\mathbf{q}_0\right)=\sqrt{\left\|\mathbf{o}_0-\mathbf{q}_0\right\|^2 +{H_0}^2}$ represents the distance between the receive UAV and TS. $\beta_{0}$ denotes the channel power at the reference distance of 1 meter, which is expressed as
\begin{equation}
	\beta_{0}=\frac{G_{\mathrm{T}} G_{\mathrm{S}}\sigma_{\mathrm{rcs}}\lambda^{2}}{\left(4\pi\right)^{3}},
\end{equation} 
where $G_{\mathrm{S}}$ is the receive antenna gain of the target and $\sigma_{\mathrm{rcs}}$ is the radar cross section (RCS). $\mathbf{a}\left(\mathbf{o}_u,\mathbf{q}_0\right)\in\mathbb{C}^{N_t\times 1}$ and $  \mathbf{b}\left(\mathbf{o}_0,\mathbf{q}_0\right)\in\mathbb{C}^{N_r\times 1}$ in \eqref{Gu} denote the transmit and receive steering vectors of the transmit and receive antennas, which are assumed to be a uniform linear array (ULA) with half-wavelength antenna spacing. Therefore, $\mathbf{a}\left(\mathbf{o}_u,\mathbf{q}_0\right)$ and $\mathbf{b}\left(\mathbf{o}_0,\mathbf{q}_0\right)$ can be respectively given by 
\begin{align}
	&\mathbf{a}\left(\mathbf{o}_u,\mathbf{q}_0\right)\nonumber\\
	&=\left[1, e^{j \pi \cos \theta \left(\mathbf{o}_u,\mathbf{q}_0\right)}, \cdots, e^{j  \pi\left(N_t-1\right) \cos \theta \left(\mathbf{o}_u,\mathbf{q}_0\right)}\right]^T,
\end{align}
\begin{align}
	&\mathbf{b}\left(\mathbf{o}_0,\mathbf{q}_0\right)\nonumber\\
	&=\left[1, e^{j \pi  \cos \phi \left(\mathbf{o}_0,\mathbf{q}_0\right)}, \cdots, e^{j  \pi\left(N_r-1\right) \cos \phi \left(\mathbf{o}_0,\mathbf{q}_0\right)}\right]^T,
\end{align}
where $\theta\left(\mathbf{o}_u,\mathbf{q}_0\right)=\arccos \frac{H_u}{r\left(\mathbf{o}_u,\mathbf{q}_0\right)}$ and $\phi\left(\mathbf{o}_0,\mathbf{q}_0\right)=\arccos \frac{H_0}{r\left(\mathbf{o}_0,\mathbf{q}_0\right)}$ denote the transmit and receive azimuth angles, respectively.

Therefore, the received signal reflected from TS at the receive UAV is written as
\begin{equation}
	\resizebox{1\hsize}{!}{$\begin{aligned}
	 {\mathbf{y}}_{r}[l]=\sum_{u \in \mathcal{U}} \mathbf{G}_u \mathbf{x}_u[l]+\mathbf{n}[l]=\sum_{u \in \mathcal{U}} \beta_u \mathbf{b}\left(\mathbf{o}_0,\mathbf{q}_0\right) \mathbf{a}^H\left(\mathbf{o}_u,\mathbf{q}_0\right) \mathbf{x}_u[l]+\mathbf{n}[l],
\end{aligned}$}
\end{equation}
where $\mathbf{n}[l] \in \mathbb{C}^{N_r \times 1}$ is the AWGN vector following $\mathbf{n}[l] \sim \mathcal{C N}\left(0, \sigma^2 \mathbf{I}\right)$ with $\sigma^2$ denoting the variance of each entry.

For notational simplicity, let us denote
\begin{align}
	\mathbf{P}_u=&\left[\mathbf{p}_{u,c}, \mathbf{p}_{ u, 1,p}, \ldots, \mathbf{p}_{ u, K,p}, \mathbf{p}_{u,r}\right] \in \mathbb{C}^{N_t \times(K+2)},\\
	\mathbf{s}[l]=&[s_c[l],s_{1,p}[l],\cdots,s_{K,p}[l],s_{r}[l]]^T \in\mathbb{C}^{(K+2)\times 1}, l\in\mathcal{L},\\
	\mathbf{S}=&\left[\mathbf{s}[1],\cdots,\mathbf{s}[L]\right] \in \mathbb{C}^{(K+2) \times L}.
\end{align}
Then, the transmit signal with $L$ symbols from UAV $u$ can be expressed as
\begin{equation}
	\mathbf{X}_{u}=\mathbf{P}_u \mathbf{S}\in \mathbb{C}^{N_t \times L}.
\end{equation}
Thus, the signal at the receive UAV can be writen as 
\begin{align}
	\mathbf{Y}_r=&\sum_{u \in \mathcal{U}} \mathbf{G}_u \mathbf{X}_u + \mathbf{N}\\\nonumber
	=&\underbrace{\sum_{u \in \mathcal{U}} \underbrace{\beta_u \mathbf{b}\left(\mathbf{o}_0,\mathbf{q}_0\right) \mathbf{a}^H\left(\mathbf{o}_u,\mathbf{q}_0\right) \mathbf{P}_u}_{\triangleq\bar{\mathbf{G}}_u}}_{\triangleq\mathbf{G}} \mathbf{S}+\mathbf{N},
\end{align}
where $\mathbf{N}=\left[\mathbf{n}[1],\ldots,\mathbf{n}[L]\right]$.
And the received sensing SNR can be written as
\begin{align}\label{rs}
	\mathrm{\gamma^s}&=\frac{\mathbb{E}\left[\|\mathbf{G} \mathbf{S}\|_{\mathrm{F}}^2\right]}{\mathbb{E}\left[\|\mathbf{N}\|_{\mathrm{F}}^2\right]}=\frac{\left|\beta_0\right|}{r^2\left(\mathbf{o}_0,\mathbf{q}_0\right) \sigma^2}\sum_{u \in \mathcal{U}}\frac{\|\mathbf{a}^H\left(\mathbf{o}_u,\mathbf{q}_0\right) \mathbf{P}_u\|^2} {r^2\left(\mathbf{o}_u,\mathbf{q}_0\right) } .
\end{align}
The derivation of \eqref{rs} is provided in Appendix \ref{proof-rs}.

\subsection {Problem Formulation}
We aim to maximize the WSR of the system by joint optimizing the UAV-CS association clusters $\left\{\mathcal{J}_u\right\}_{u=1}^{U} $, the UAV deployment $\mathcal{O}$, the transmit beamforming matrix $\left\{\mathbf{P}_u\right\}_{u=1}^{U}$ and common rate allocation variables $\left\{C_k\right\}_{k=1}^{K}$, subject to the sensing performance requirements, transmission QoS, and power limits, which can be formulated as
\begin{small}
\begin{subequations}\label{P1}
	\begin{align}
		(\text{P0}):&\max _{\left\{\mathcal{J}_u\right\}_{u=1}^{U}, \mathcal{O},\atop \left\{\mathbf{P}_u\right\}_{u=1}^{U},\left\{C_k\right\}_{k=1}^{K} }\quad \sum_{k \in \mathcal{K}} \mu_k\left(C_k+R_k^p\right)\\
		\mbox{s.t.}
		\label{st1}&\quad\mathcal{J}_u\cap \mathcal{J}_j=\emptyset, \forall u,j \in \mathcal{U}, u\neq j,\\
		\label{st2}&\quad\sum_{u \in \mathcal{U}}\left|\mathcal{J}_u\right| =K ,\forall \mathcal{J}_u\neq \emptyset,\forall  u \in \mathcal{U}, \\
		\label{st3}&\quad\sum_{k \in \mathcal{K}} C_{k} \leq \min _{i}\left\{R_{i}^{c} \mid i \in \mathcal{K}\right\},\\
		\label{st4}&\quad C_k\geq\text{0},\forall k\in\mathcal{K},\\
		\label{st5}&\quad \text{tr}\left(\mathbf{P}_u{\mathbf{P}_u}^H\right) \leq P_{max},\forall u \in \mathcal{U},\\
		\label{st6}&\quad C_k+R_k^p\geq R_k^{th}\text{,}\forall k\in\mathcal{K},\\
		\label{st7}&\quad \mathrm{\gamma^s}\geq\bar\gamma,
	\end{align}
\end{subequations}
\end{small}where \eqref{st1} guarantees that each CS is only associated with one UAV. \eqref{st2} ensures all CSs are associated with the UAVs, where $\left|\mathcal{J}_u\right|$ is the number of CSs served by UAV $u$. \eqref{st3} ensures the common stream can be decoded by all users. \eqref{st4} implements a feasible partition of the common stream. \eqref{st5} ensures that the transmit power of each UAV meets the power budget $P_{max}$. \eqref{st6} indicates the communication rate requirement of each CS, where $R_k^{th}$ is the threshold for the total achievable rate of CS $k$. \eqref{st7} indicates the sensing performance requirement, where $\bar{\gamma}$ is the predefined sensing SNR threshold. 
\section{Proposed Algorithm for Weighted Sum Rate Maximization in CoRSMA-ISAC}\label{ProposedSolution}

The optimization problem \text{(P0)} is a non-convex optimization problem, which is NP-hard in general and cannot be solved directly by standard convex optimization tools. In this regard, we propose an efficient algorithm to solve the problem \text{(P0)} which first determines the UAV-CS association cluster  $\left\{\mathcal{J}_u\right\}_{u=1}^{U}$ via K-Means, then alternately optimizes the UAV deployment $\mathcal{O}$ and beamforming matrix $\left\{\mathbf{P}_u\right\}_{u=1}^{U}$ and common rate allocation variables $\left\{C_k\right\}_{k=1}^{K}$ by fixing one and solve remaining one. The framework of the proposed algorithm is illustrated in Fig. \ref{Scheme}.
\begin{figure}[h]
	\center
	\includegraphics[width=3.3in]{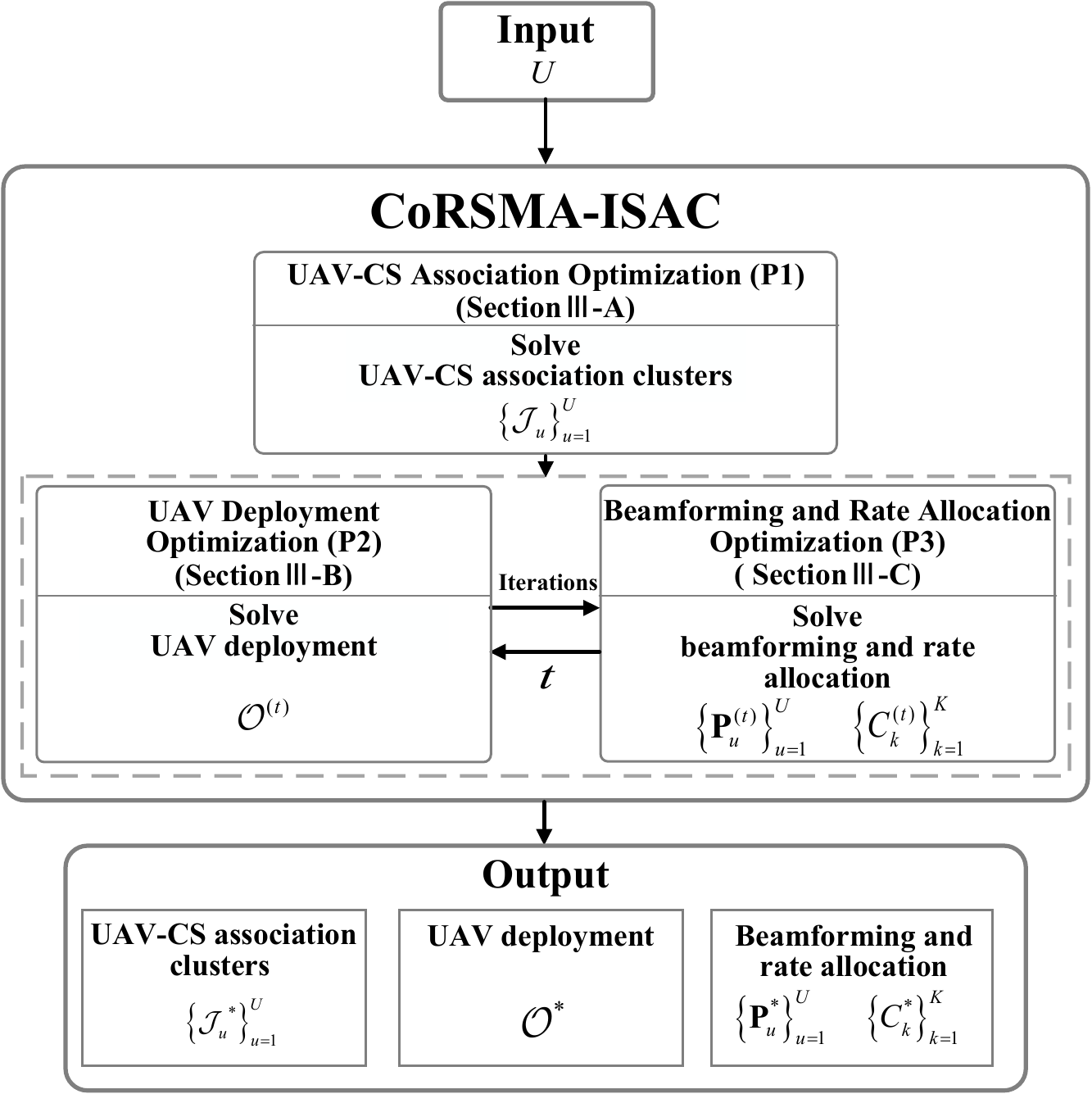}\\
	\caption{Framework of the proposed algorithm to the joint UAV-CS association, UAV deployment, and beamforming and rate allocation optimization problem for CoRSMA-ISAC in emergency UAV system.}\label{Scheme}
\end{figure}

\subsection{UAV-CS Association Optimization}\label{UAV-CS-Association-Optimization}
We first update the UAV-CS association clusters $\left\{\mathcal{J}_u\right\}_{u=1}^{U} $ with the fixed transmit beamforming matrix $\left\{\mathbf{P}_u\right\}_{u=1}^{U}$ and common rate allocation variable $\left\{C_k\right\}_{k=1}^{K}$. Then, the optimization problem \text{(P0)} reformulated as

\begin{subequations}\label{P2}
	\begin{align}
		(\text{P1}):\max _{\left\{\mathcal{J}_u\right\}_{u=1}^{U}  }\quad &\sum_{k \in \mathcal{K}} \mu_k R_k^p\\
		\mbox{s.t.}\quad
		\label{2st1}&\mathcal{J}_u\cap \mathcal{J}_j=\emptyset, \forall u,j \in \mathcal{U}, u\neq j,\\
		\label{2st2}&\sum_{u \in \mathcal{U}}\left|\mathcal{J}_u\right| =K ,\forall \mathcal{J}_u\neq \emptyset,\forall  u \in \mathcal{U},\\
		\label{2st3}&C_k+ R_k^p\geq R_k^{th}\text{,}\forall k\in\mathcal{K},\\
		\label{2st5}&\mathrm{\gamma^s}\geq\bar\gamma.
	\end{align}
\end{subequations}
The problem \text{(P1)} is an integer programming problem, which can be solved by K-Means algorithm. First, we determine the number of clusters based on the number of UAVs $U$ and cluster CSs based on CSs position. Specifically, we denote $\mathcal{J}_u^{(\nu)}$ as $u$-th cluster at $\nu$-th step, and treat the cluster centroids as $\mathbf{o}_u^{(\nu)}, u \in \mathcal{U}$, with $\mathcal{O}^{(\nu)}$ denoting the position set of cluster centroids. 

At the $\nu$-th step, we first calculate the distances between CSs and cluster centroids, and assign CS $k$ into the cluster corresponding
to the nearest cluster centroid $\mathcal{J}_u^{(\nu)}$.
\begin{equation}\label{kmeans-J}
	k \in \mathcal{J}_u^{(\nu)} \Leftrightarrow	r\left(\mathbf{o}_u^{(\nu)},\mathbf{q}_k\right)= \min_{i\in U}\left\{r\left(\mathbf{o}_i^{(\nu)},\mathbf{q}_k\right)\right\}.
\end{equation}
    
Next, update the $u$-th cluster centroid by averaging the geographic positions of clustered CSs, which is denoted by
\begin{equation}\label{kmeans-o}
	\mathbf{o}_{u}^{(\nu+1)}=\frac{\sum_{k\in \mathcal{J}_{u}}\mathbf{q}_k}{|\mathcal{J}_{u}|}, \forall u \in \mathcal{U}.
\end{equation}
Then update clusters $\mathcal{J}_u^{(\nu+1)}$ based on \eqref{kmeans-J} until the cluster centroids no longer change. Then we get the clusters $\left\{\mathcal{J}_u^{(\nu+1)}\right\}_{u=1}^{U}$ as the UAV-CS association clusters and the position set of cluster centroids $\mathcal{O}^{(\nu+1)}$, which can be used as initialization of the UAV deployment optimization.
The process of using the K-Means method to solve problem \text{(P1)} is summarized in Algorithm \ref{kmeans}.

\begin{algorithm}[tt]
	\renewcommand{\algorithmicrequire}{\textbf{Inputs:}}
	\renewcommand{\algorithmicensure}{\textbf{Outputs:}}
	\caption{UAV-CS Association for Problem \text{(P1)} By K-Means}
	\label{kmeans}
	\begin{algorithmic}[1]
		\REQUIRE{ $\mathcal{Q}$. Fixed UAV height $z_{u,o}=z_0,\forall u\in\mathcal{U}$. Set iteration number $\nu$ }.
		\ENSURE  the  UAV-CS association clusters $\left\{\mathcal{J}_u\right\}_{u=1}^{U}$. \\
		\STATE {\bf{begin:}}
		\STATE $\nu=0$
		\STATE  Randomly initialize the cluster center $\mathcal{O}^{(\nu)}$.
		\REPEAT
		\STATE Update the clusters $\left\{\mathcal{J}_u^{(\nu+1)}\right\}_{u=1}^{U}$ based on \eqref{kmeans-J}.
		\STATE Update the cluster centroid $\mathcal{O}^{(\nu+1)}$ based on\eqref{kmeans-o}.
		\STATE $\nu=\nu+1$
		\UNTIL{$\mathcal{O}^{(\nu)}=\mathcal{O}^{(\nu+1)} $ }
		\STATE Set $\mathcal{J}_{u}=\mathcal{J}_{u}^{(\nu)},\forall u\in\mathcal{U}$
	\end{algorithmic}
\end{algorithm}

\subsection{UAV Deployment Optimization}\label{UAVDeploymentOptimization}
The UAV deployment $\mathcal{O}$ is then optimized with the fixed transmit beamforming vectors $\left\{\mathbf{P}_u\right\}_{u=1}^{U}$, common rate allocation variable $\left\{C_k\right\}_{k=1}^{K}$, and updated UAV-CS association clusters $\left\{\mathcal{J}_u\right\}_{u=1}^{U}$ and is initialized to $\mathcal{O}^0$ based on the position set of cluster centroids obtained in problem \text{(P1)}. Specifically, by introducing a slack variable $f_{u,k}$, the optimization problem \text{(P0)} can be reformulated as
\begin{subequations}\label{P30}
	\begin{align}
	(\text{P2}):\max _{ \mathcal{O}}\quad &\sum_{k \in \mathcal{K}} \mu_k f_{u,k}&\\
		\mbox{s.t.}\quad
			\label{3st2}& f_{u,k}\leq R_k^p,\forall k \in \mathcal{K},\\
			\label{3st3}&C_k+R_k^p\geq R_k^{th}\text{,}\forall k\in\mathcal{K},\\
			\label{3st4}&\mathrm{\gamma^s}\geq\bar\gamma.
	\end{align}
\end{subequations}

Problem \text{(P2)} is also a non-convex problem due to the constraints \eqref{3st2}, \eqref{3st3} and \eqref{3st4}. Specifically, to deal with the non-convex constraint \eqref{3st3}, we introduce slack variables $\{\hat{r}_{i,k}\}$ that follow
\begin{equation}\label{r-ik}
	\hat{r}_{i,k}\leq r^2\left(\mathbf{o}_i,\mathbf{q}_k\right),\forall k \in \mathcal{K}, k\notin \mathcal{ J }_i .
\end{equation}

\begin{prop}\label{prop1}
 With \eqref{r-ik}, \eqref{3st3} can be reformulated as a convex constraint, which is expressed as
	\begin{align}\label{3st3-o}
		\resizebox{1\hsize}{!}{$\begin{aligned}\left\|\mathbf{o}_u-\mathbf{q}_k\right\|^2\leq\frac{1}{\Psi}\left(\frac{\left|\mathds{1} \mathbf{p}_{ u, k,p}\right|^2}{2^{\frac{R_k^{t h}-C_k}{B}}-1}-\sum_{j \in \mathcal{ J }_u\backslash\{k\}}\left|\mathds{1}  \mathbf{p}_{u, j,p}\right|^2\right)-{H_u}^2,
		\end{aligned}$}
	\end{align}
where
\begin{align}
	\Psi&=\sum_{i \in \mathcal{U} \backslash\{u\}} \sum_{j \in \mathcal{J}_i} \frac{1}{	\hat{r}_{i,k}}\left|\mathds{1} \mathbf{p}_{i, j,p}\right|^2+\sigma^2\varepsilon_0^{-1},
\end{align}
 with $\mathds{1} \in \mathbb{R}^{1\times N_t}$ being an all-ones vector.
\end{prop}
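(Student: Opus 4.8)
The plan is to start from the rate constraint \eqref{3st3}, namely $C_k + R_k^p \geq R_k^{th}$, and turn it into an explicit inequality on $\|\mathbf{o}_u - \mathbf{q}_k\|^2$ by unwinding the definition of $R_k^p = B\log_2(1+\gamma_k^p)$. First I would substitute $R_k^p \geq R_k^{th} - C_k$, exponentiate both sides to get $1 + \gamma_k^p \geq 2^{(R_k^{th}-C_k)/B}$, i.e. $\gamma_k^p \geq 2^{(R_k^{th}-C_k)/B} - 1$. Then I would plug in the SINR expression \eqref{rp}, recalling that $k \in \mathcal{J}_u$ so the relevant serving UAV is $u$. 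The key observation — which must be stated explicitly — is that since $\tilde{h}_{u,k}^n$ is assumed to have unit magnitude in this deployment subproblem (i.e., only the large-scale term matters for the deployment design, so effectively $\mathbf{h}_{u,k} = \sqrt{\varepsilon_{u,k}}\,\mathds{1}^T$), every term $|\mathbf{h}_{u,k}^H \mathbf{p}_{u,j,p}|^2$ becomes $\varepsilon_{u,k}|\mathds{1}\mathbf{p}_{u,j,p}|^2 = \frac{\varepsilon_0}{r^2(\mathbf{o}_u,\mathbf{q}_k)}|\mathds{1}\mathbf{p}_{u,j,p}|^2$, and similarly the inter-cell terms become $\frac{\varepsilon_0}{r^2(\mathbf{o}_i,\mathbf{q}_k)}|\mathds{1}\mathbf{p}_{i,j,p}|^2$. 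This is the step I expect a careful reader to want justified, since the paper's channel model nominally has random small-scale fading; the deployment optimization is presumably carried out on the large-scale (LoS-dominated) component.

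Next I would rearrange. Writing $t_k \triangleq 2^{(R_k^{th}-C_k)/B} - 1$, the SINR inequality $\gamma_k^p \geq t_k$ becomes
\begin{align}
\frac{\varepsilon_0 r^{-2}(\mathbf{o}_u,\mathbf{q}_k)|\mathds{1}\mathbf{p}_{u,k,p}|^2}{\varepsilon_0 r^{-2}(\mathbf{o}_u,\mathbf{q}_k)\sum_{j\in\mathcal{J}_u\setminus\{k\}}|\mathds{1}\mathbf{p}_{u,j,p}|^2 + \sum_{i\in\mathcal{U}\setminus\{u\}}\varepsilon_0 r^{-2}(\mathbf{o}_i,\mathbf{q}_k)\sum_{j\in\mathcal{J}_i}|\mathds{1}\mathbf{p}_{i,j,p}|^2 + \sigma^2} \geq t_k. \nonumber
\end{align}
Cross-multiplying (the denominator is positive) and dividing through by $\varepsilon_0 r^{-2}(\mathbf{o}_u,\mathbf{q}_k)$, I get
\begin{align}
\frac{|\mathds{1}\mathbf{p}_{u,k,p}|^2}{t_k} - \sum_{j\in\mathcal{J}_u\setminus\{k\}}|\mathds{1}\mathbf{p}_{u,j,p}|^2 \;\geq\; r^2(\mathbf{o}_u,\mathbf{q}_k)\Big(\sum_{i\in\mathcal{U}\setminus\{u\}} r^{-2}(\mathbf{o}_i,\mathbf{q}_k)|\mathds{1}\mathbf{p}_{i,j,p}|^2 + \sigma^2\varepsilon_0^{-1}\Big). \nonumber
\end{align}
Here is where the slack variables \eqref{r-ik} enter: replacing each $r^{-2}(\mathbf{o}_i,\mathbf{q}_k)$ by the upper bound $\hat{r}_{i,k}^{-1}$ (valid since $\hat{r}_{i,k} \leq r^2(\mathbf{o}_i,\mathbf{q}_k)$) makes the right-hand coefficient equal to exactly the quantity $\Psi$ defined in the statement, and replacing $t_k$ by $2^{(R_k^{th}-C_k)/B}-1$ gives the claimed numerator. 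Dividing by $\Psi$ (positive) and using $r^2(\mathbf{o}_u,\mathbf{q}_k) = \|\mathbf{o}_u-\mathbf{q}_k\|^2 + H_u^2$ yields exactly \eqref{3st3-o}.

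Finally I would argue convexity: the right-hand side of \eqref{3st3-o} is a constant (in the deployment variable $\mathcal{O}$, with $\{\hat{r}_{i,k}\}$, beamformers, and $C_k$ all fixed at this stage), and the left-hand side $\|\mathbf{o}_u - \mathbf{q}_k\|^2$ is a convex quadratic in $\mathbf{o}_u$; hence the sublevel-set constraint is convex. The main obstacle — and the part I would be most careful to spell out — is not the algebra but ensuring the direction of each inequality is preserved: cross-multiplying is fine because the SINR denominator is positive, but substituting the bound $\hat{r}_{i,k}^{-1} \geq r^{-2}(\mathbf{o}_i,\mathbf{q}_k)$ \emph{enlarges} the right-hand side, so satisfying \eqref{3st3-o} with $\hat{r}_{i,k}$ is \emph{sufficient} (a conservative tightening) for the original \eqref{3st3}; this should be noted, together with the fact that constraint \eqref{r-ik} itself is handled separately (it is the reverse-convex part, presumably linearized by SCA elsewhere). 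I would also double-check the implicit assumption $t_k > 0$, i.e. $R_k^{th} > C_k$, which is needed for the division to be sign-consistent; if $C_k \geq R_k^{th}$ the constraint is vacuous.
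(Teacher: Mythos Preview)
Your proposal is correct and matches the paper's proof in Appendix~\ref{proof1} step for step: rewrite $C_k+R_k^p\ge R_k^{th}$ as an SINR lower bound, substitute the large-scale channel $\mathbf h_{u,k}=\sqrt{\varepsilon_{u,k}}\,\mathds 1^T$ to produce the $|\mathds 1\,\mathbf p|^2$ and $r^2(\cdot,\cdot)$ factors, rearrange, and then replace $r^{-2}(\mathbf o_i,\mathbf q_k)$ by $\hat r_{i,k}^{-1}$ via \eqref{r-ik} to obtain $\Psi$ and the form \eqref{3st3-o}. One small refinement to your convexity argument: in problem (P2.$\upsilon$) the slacks $\{\hat r_{i,k}\}$ are themselves optimization variables, so the right-hand side of \eqref{3st3-o} is not literally constant; convexity still holds, but through the joint quadratic-over-linear structure $r^2(\mathbf o_u,\mathbf q_k)/\hat r_{i,k}$ (and $H_u^2/\hat r_{i,k}$) rather than the ``constant RHS'' reasoning you give.
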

\begin{proof}
	Please refer to Appendix \ref{proof1}.
\end{proof}

Next, we deal with the non-convex constraint \eqref{3st2}. Similarly, with \eqref{r-ik}, constraint \eqref{3st2} can be reformulated as
\begin{equation}\label{st-SLACK-Rp}
	f_{u,k}\leq \tilde{R}^p_k,\forall k \in \mathcal{K},
\end{equation}
where $\tilde{R}^p_k$ is expressed in \eqref{slack-Rp} at the top of this page.  
\newcounter{mytempeqncnt1}
\setcounter{mytempeqncnt1}{\value{equation}}
\begin{figure*}
	\begin{equation}\label{slack-Rp}
		\tilde{R}_k^p=B\log_2\left(1+\frac{\left|\mathds{1} \mathbf{p}_{u, k,p}\right|^2}{\sum_{j \in \mathcal{J}_u\backslash\{k\}}\left|\mathds{1}\mathbf{p}_{u, j,p}\right|^2+\sum_{i \in U \backslash\{u\}} \sum_{j \in \mathcal{J}_i} \frac{r^2\left(\mathbf{o}_u,\mathbf{q}_k\right)}{\hat{r}_{i,k}}\left|\mathds{1}\mathbf{p}_{ i, j,p}\right|^2+\sigma^2\varepsilon_0^{-1} r^2\left(\mathbf{o}_u,\mathbf{q}_k\right)}\right),
	\end{equation}
\setcounter{mytempeqncnt1}{\value{equation}}
\hrulefill
\end{figure*}We adopt the SCA to deal with it. Specifically, by using the fact that the first-order Taylor expansion of the convex differentiable functions serves as a global lower bound \cite{jaafar2020}, we have 
\begin{equation}\label{3st3-1}
	f_{u,k} \leq A_{u,k}^{(\upsilon)}+{\mathbf{c}_{u,k}^{(\upsilon)}}^H\left(\mathbf{o}_u-\mathbf{o}_u^{(\upsilon)}\right),
\end{equation}
where $\mathbf{o}_u^{(\upsilon)}$ is the local point obtained at the $\upsilon$-th iteration.   $A_{u,k}^{(\upsilon)}$ and $\mathbf{c}_{u,k}^{(\upsilon)}$ are expressed in \eqref{A} and \eqref{B}, respectively, at the top of this page.
	
\newcounter{mytempeqncnt}
\setcounter{mytempeqncnt}{\value{equation}}
\begin{figure*}
	\vspace{-14pt}
	\begin{align}\label{A}
		A_{u,k}^{(\upsilon)}=B\log_2\left(1+\frac{\left|\mathds{1} \mathbf{p}_{u, k,p}\right|^2}{\sum_{j \in \mathcal{J}_u\backslash\{k\}}\left|\mathds{1}\mathbf{p}_{u, j,p}\right|^2+\sum_{i \in U \backslash\{u\}} \sum_{j \in \mathcal{J}_i} \frac{r^2\left(\mathbf{o}_u^{(\upsilon)},\mathbf{q}_k\right)}{\hat{r}_{i,k}}\left|\mathds{1}\mathbf{p}_{i, j,p}\right|^2+\sigma^2\varepsilon_0^{-1} r^2\left(\mathbf{o}_u^{(\upsilon)},\mathbf{q}_k\right)}\right),
	\end{align} 

	\begin{align}\label{B}
		&\mathbf{c}_{u,k}^{(\upsilon)}=\frac{-2B\log _2(e)\left|\mathds{1} \mathbf{p}_{u, k,p}\right|^2 \cdot\left(\sum_{i \in U \backslash\{u\}} \sum_{j \in \mathcal{J}_i} \frac{1}{\hat{r}_{i,k}}\left|\mathds{1} \mathbf{p}_{i, j,p}\right|^2+\sigma^2\varepsilon_0^{-1}\right)\left(\mathbf{o}_u^{(\upsilon)}-\mathbf{q}_k\right)}{C_{u,k}^{(\upsilon)}D_{u,k}^{(\upsilon)}},\\
		\text{where } &C_{u,k}^{(\upsilon)}=\left(\sum_{j \in \mathcal{J}_u \backslash\{k\}}\left|\mathds{1} \mathbf{p}_{u, j,p}\right|^2+\sum_{i \in \mathcal{U} \backslash\{u\}} \sum_{j \in \mathcal{J}_i} \frac{r^2\left(\mathbf{o}_u^{(\upsilon)},\mathbf{q}_k\right)}{\hat{r}_{i,k}}\left|\mathds{1} \mathbf{p}_{i, j,p}\right|^2+\sigma^2\varepsilon_0^{-1} r^2\left(\mathbf{o}_u^{(\upsilon)},\mathbf{q}_k\right)\right),\nonumber\\
		&D_{u,k}^{(\upsilon)}=\left(\sum_{j \in \mathcal{J}_u}\left|\mathds{1} \mathbf{p}_{u, j,p}\right|^2+\sum_{i \in \mathcal{U} \backslash\{u\}} \sum_{j \in \mathcal{J}_i} \frac{r^2\left(\mathbf{o}_u^{(\upsilon)},\mathbf{q}_k\right)}{\hat{r}_{i,k}}\left|\mathds{1} \mathbf{p}_{i, j,p}\right|^2+\sigma^2\varepsilon_0^{-1} r^2\left(\mathbf{o}_u^{(\upsilon)},\mathbf{q}_k\right)\right).\nonumber
	\end{align}
\setcounter{mytempeqncnt}{\value{equation}}
\hrulefill
\end{figure*}

Besides, in order to deal with the non-convex constraint \eqref{3st4}, let us define $\mathbf{R}_u=\mathbf{P}_u\mathbf{P}_u^H$ and $\mathbf{A}\left(\mathbf{o}_u,\mathbf{q}_0\right)=\mathbf{a}\left(\mathbf{o}_u,\mathbf{q}_0\right)\mathbf{a}\left(\mathbf{o}_u,\mathbf{q}_0\right)^H$ for notational convenience. Accordingly, we denote the entries in the $p$-th row and $q$-th column of $\mathbf{R}_u$ and $\mathbf{A}\left(\mathbf{o}_u,\mathbf{q}_0\right)$ as $\left[\mathbf{R}_u\right]_{p,q}$ and  $\left[\mathbf{A}\left(\mathbf{o}_u,\mathbf{q}_0\right)\right]_{p,q}$, respectively. The magnitude and phase of $\left[\mathbf{R}_u\right]_{p,q}$ are denoted by$\left|\left[\mathbf{R}_u\right]_{p,q}\right|$ and $\theta_{p, q}^{\mathbf{R}_u}$, respectively. Therefore, \eqref{3st4} can be rewritten equivalently as
\begin{align}\label{rss}
	\sum_{u \in \mathcal{U}} \frac{\operatorname{tr}\left(\mathbf{R}_u \mathbf{A}\left(\mathbf{o}_u, \mathbf{q}_0\right)\right)}{r^2\left(\mathbf{o}_u, \mathbf{q}_0\right)}  \geq \frac{ r^2\left(\mathbf{o}_0, \mathbf{q}_0\right) \sigma^2 \bar{\gamma}}{\left|\beta_0\right|}.
\end{align}
\begin{prop}\label{prop2}
	 We perform the first-order Taylor expansion on the left-hand-side (LHS) of \eqref{rss} at local point $\mathbf{o}_u^{(\upsilon)}$, then \eqref{rss} is reformulated as
\begin{equation}\label{3st4-1}
	H_u^{(\upsilon)}+{\mathbf{e}_u^{(\upsilon)}}^H\left(\mathbf{o}_u-\mathbf{o}_u^{(\upsilon)}\right) \geq \frac{ r^2\left(\mathbf{o}_0, \mathbf{q}_0\right) \sigma^2 \bar{\gamma}}{\left|\beta_0\right|},
\end{equation}
where
\begin{align}
	&H_u^{(\upsilon)}= \frac{\operatorname{tr}\left(\mathbf{R}_u \mathbf{A}\left(\mathbf{o}_u^{(\upsilon)}, \mathbf{q}_0\right)\right)}{r^2\left(\mathbf{o}_u^{(\upsilon)}, \mathbf{q}_0\right)},\\
	&\resizebox{0.98\hsize}{!}{$\begin{aligned}
			\mathbf{e}_u^{(\upsilon)}=
			\frac{\left[F\left(\mathbf{R}_u,\mathbf{o}_u^{(\upsilon)}, \mathbf{q}_0\right) r^2\left(\mathbf{o}_u^{(\upsilon)}, \mathbf{q}_0\right)
				-2\operatorname{tr}\left(\mathbf{R}_u \mathbf{A}\left(\mathbf{o}_u^{(\upsilon)}, \mathbf{q}_0\right)\right)\left(\mathbf{o}_u^{(\upsilon)}-\mathbf{q}_0\right)\right]}{ r^4\left(\mathbf{o}_u^{(\upsilon)}, \mathbf{q}_0\right)}	
		\end{aligned}$},\\
	&F\left(\mathbf{R}_u,\mathbf{o}_u^{(\upsilon)}, \mathbf{q}_0\right)=\left.\frac{\partial\operatorname{tr}\left(\mathbf{R}_u \mathbf{A}\left(\mathbf{o}_u, \mathbf{q}_0\right)\right)}{\partial \mathbf{o}_u}\right|_{\mathbf{o}_u=\mathbf{o}_u^{(\upsilon)}}.
\end{align} 

\end{prop}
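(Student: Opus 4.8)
The plan is to exploit the fact that, in subproblem (P2), the beamformers are frozen so $\mathbf{R}_u=\mathbf{P}_u\mathbf{P}_u^H$ is a fixed positive semidefinite matrix, and the receive UAV location $\mathbf{o}_0$ is not an optimization variable, so the entire right-hand side of \eqref{rss} is constant. Consequently the only nonconvexity in \eqref{rss} sits in the separable sum on the left, and it suffices to replace each summand $g_u(\mathbf{o}_u)\triangleq \operatorname{tr}\!\big(\mathbf{R}_u\mathbf{A}(\mathbf{o}_u,\mathbf{q}_0)\big)/r^2(\mathbf{o}_u,\mathbf{q}_0)$ by its first-order Taylor expansion about the current iterate $\mathbf{o}_u^{(\upsilon)}$; since such an expansion is affine, the reformulated constraint \eqref{3st4-1} is convex and coincides with \eqref{rss} at the expansion point.

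Concretely, I would write $g_u=N_u/D_u$ with $N_u(\mathbf{o}_u)=\operatorname{tr}\!\big(\mathbf{R}_u\mathbf{A}(\mathbf{o}_u,\mathbf{q}_0)\big)=\mathbf{a}^H(\mathbf{o}_u,\mathbf{q}_0)\mathbf{R}_u\mathbf{a}(\mathbf{o}_u,\mathbf{q}_0)$ and $D_u(\mathbf{o}_u)=r^2(\mathbf{o}_u,\mathbf{q}_0)=\|\mathbf{o}_u-\mathbf{q}_0\|^2+H_u^2$, and apply the quotient rule $\nabla_{\mathbf{o}_u}g_u=\big(D_u\nabla_{\mathbf{o}_u}N_u-N_u\nabla_{\mathbf{o}_u}D_u\big)/D_u^2$. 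Here $\nabla_{\mathbf{o}_u}D_u=2(\mathbf{o}_u-\mathbf{q}_0)$ is immediate, and I would denote $\nabla_{\mathbf{o}_u}N_u\triangleq F(\mathbf{R}_u,\mathbf{o}_u,\mathbf{q}_0)$, evaluating it by the chain rule: each entry $[\mathbf{A}(\mathbf{o}_u,\mathbf{q}_0)]_{p,q}=e^{\,j\pi(p-q)\cos\theta(\mathbf{o}_u,\mathbf{q}_0)}$ depends on $\mathbf{o}_u$ only through the scalar $\cos\theta(\mathbf{o}_u,\mathbf{q}_0)=H_u/r(\mathbf{o}_u,\mathbf{q}_0)$, so $F=\big(\partial N_u/\partial\cos\theta\big)\,\nabla_{\mathbf{o}_u}\cos\theta$ with $\nabla_{\mathbf{o}_u}\cos\theta=-H_u(\mathbf{o}_u-\mathbf{q}_0)/r^3(\mathbf{o}_u,\mathbf{q}_0)$ and $\partial N_u/\partial\cos\theta$ obtained by differentiating the exponentials and keeping the real part (legitimate because $N_u$ is real by the Hermitian symmetry of $\mathbf{R}_u$). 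Substituting $D_u^2=r^4(\mathbf{o}_u,\mathbf{q}_0)$ and evaluating at $\mathbf{o}_u^{(\upsilon)}$ reproduces $H_u^{(\upsilon)}=g_u(\mathbf{o}_u^{(\upsilon)})$ and $\mathbf{e}_u^{(\upsilon)}=\nabla_{\mathbf{o}_u}g_u(\mathbf{o}_u^{(\upsilon)})$ in exactly the stated closed forms; plugging these into $g_u(\mathbf{o}_u)\approx g_u(\mathbf{o}_u^{(\upsilon)})+{\mathbf{e}_u^{(\upsilon)}}^{H}(\mathbf{o}_u-\mathbf{o}_u^{(\upsilon)})$ and summing over $u\in\mathcal{U}$ gives \eqref{3st4-1}.

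The main obstacle is the explicit evaluation of $F(\mathbf{R}_u,\mathbf{o}_u,\mathbf{q}_0)$: one has to push the derivative through the composition $\mathbf{o}_u\mapsto r(\mathbf{o}_u,\mathbf{q}_0)\mapsto\cos\theta\mapsto\mathbf{A}(\mathbf{o}_u,\mathbf{q}_0)$, manage the complex exponentials entry-by-entry, and reassemble the outcome as a $2\times1$ real gradient vector; because $N_u$ varies with $\mathbf{o}_u$ only through $\cos\theta$, $F$ (and hence $\mathbf{e}_u^{(\upsilon)}$) is in fact collinear with $\mathbf{o}_u^{(\upsilon)}-\mathbf{q}_0$, which is a useful consistency check but does not shorten the bookkeeping, and it is presumably why the paper keeps $F$ in symbolic form. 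A secondary subtlety is that a first-order expansion is a global underestimator only if $g_u$ is concave; if concavity on the feasible region cannot be established, then \eqref{3st4-1} is not a strict inner restriction of \eqref{rss} but the standard SCA surrogate, whose justification is that it matches the value and gradient of $g_u$ at $\mathbf{o}_u^{(\upsilon)}$, so the alternating procedure still produces a monotone objective sequence converging to a KKT point of (P2).
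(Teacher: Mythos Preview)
Your proposal is correct and follows essentially the same route as the paper: set $g_u=N_u/D_u$, apply the quotient rule with $\nabla_{\mathbf{o}_u}D_u=2(\mathbf{o}_u-\mathbf{q}_0)$, and compute $F=\nabla_{\mathbf{o}_u}N_u$ by differentiating through $\cos\theta=H_u/r(\mathbf{o}_u,\mathbf{q}_0)$ after exploiting the Hermitian symmetry of $\mathbf{R}_u$ and $\mathbf{A}$ to write $N_u$ as a real cosine sum. One minor correction: contrary to your guess, the paper does not keep $F$ purely symbolic in the appendix but writes it out explicitly as the sine sum obtained by differentiating the cosine expansion of $\operatorname{tr}(\mathbf{R}_u\mathbf{A})$; your chain-rule description is equivalent and would produce the same formula.
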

\begin{proof}
	Please refer to Appendix \ref{proof2}.
\end{proof}

Furthermore, since $r^2\left(\mathbf{o}_i,\mathbf{q}_k\right)$ is a convex function with respect to $\mathbf{o}_i$, we replace the right-hand-side (RHS) of \eqref{r-ik} with its first-order Taylor expansion and re-expression \eqref{r-ik} as
\begin{equation}\label{r--ik}
	\resizebox{1\hsize}{!}{$\begin{aligned}
	\hat{r}_{i,k}\leq \left\|\mathbf{o}_i^{(\upsilon)}-\mathbf{q}_k\right\|^2+2\left(\mathbf{o}_i^{(\upsilon)}-\mathbf{q}_k\right)^T\left(\mathbf{o}_u-\mathbf{o}_u^{(\upsilon)}\right)+{H_i}^2,\forall k \in \mathcal{K}, k\notin \mathcal{ J }_i .
\end{aligned}$}
\end{equation} 

Finally, by replacing the non-convex constraints \eqref{3st2}, \eqref{3st3}, and \eqref{r-ik} as their approximated forms in \eqref{3st3-1}, \eqref{3st4-1}, and \eqref{r--ik}, we obtain the convex version of the optimization problem \text{(P2)} in the $\upsilon$-th iteration as
\begin{subequations}\label{P31}
	\begin{align}
		(\text{P2.}\upsilon):&\max _{\{f_{u,k}\}, \mathcal{O}^{(\upsilon)},\{\hat{r}_{i,k}\}}\quad \sum_{k \in \mathcal{K}}\mu_k f_{u,k}\\
		\mbox{s.t.}\quad
		&f_{u,k} \leq A_{u,k}^{(\upsilon)}+{\mathbf{c}_{u,k}^{(\upsilon)}}^H\left(\mathbf{o}_u-\mathbf{o}_u^{(\upsilon)}\right), \forall k \in \mathcal{K},k\in \mathcal{ J }_u,\\
		&\resizebox{0.86\hsize}{!}{$\begin{aligned}\left\|\mathbf{o}_u-\mathbf{q}_k\right\|^2\leq\frac{1}{\Psi}\left(\frac{\left|\mathds{1} \mathbf{p}_{ u, k,p}\right|^2}{2^{\frac{R_k^{t h}-C_k}{B}}-1}-\sum_{j \in \mathcal{ J }_u\backslash\{k\}}\left|\mathds{1}  \mathbf{p}_{u, j,p}\right|^2\right)-{H_u}^2,\forall k \in \mathcal{K},k\in \mathcal{ J }_u
			\end{aligned}$},\\
		&\resizebox{0.86\hsize}{!}{$\begin{aligned}H_u^{(\upsilon)}+{\mathbf{e}_u^{(\upsilon)}}^H\left(\mathbf{o}_u-\mathbf{o}_u^{(\upsilon)}\right) \geq \frac{ r^2\left(\mathbf{o}_0, \mathbf{q}_0\right) \sigma^2 \bar{\gamma}}{\left|\beta_0\right|},\forall u \in \mathcal{U}\end{aligned}$},\\
		&\resizebox{0.87\hsize}{!}{$\begin{aligned}
				\hat{r}_{i,k}\leq \left\|\mathbf{o}_i^{(\upsilon)}-\mathbf{q}_k\right\|^2+2\left(\mathbf{o}_i^{(\upsilon)}-\mathbf{q}_k\right)^T\left(\mathbf{o}_u-\mathbf{o}_u^{(\upsilon)}\right)+{H_i}^2,\forall k \in \mathcal{K}, k\notin \mathcal{ J }_i .
			\end{aligned}$}		
	\end{align}
\end{subequations}
 Hence, problem (\text{P2.}$\upsilon$) can be solved efficiently using the convex software tools such as CVX \cite{Grant2018}. The detailed process is given in Algorithm \ref{sca}.
\begin{algorithm}[t]
	\renewcommand{\algorithmicrequire}{\textbf{Inputs:}}
	\renewcommand{\algorithmicensure}{\textbf{Outputs:}}
	\caption{Iterative Optimization for Problem (\text{P2.}$\upsilon$)}
	\label{sca}
	\begin{algorithmic}[2]
		\REQUIRE{ $\mathcal{O}^0,\left\{\mathcal{J}_u^0\right\}_{u=1}^{U}, \left\{\mathbf{P}_u^0\right\}_{u=1}^{U}, \left\{C_k\right\}_{k=1}^{K}$. Set iteration number $\upsilon$,  }.
		\ENSURE  the set of UAV position $\mathcal{O}$. \\
		\STATE {\bf{begin:}}
		\STATE Set $ \upsilon=0 $.
		\STATE Set $\mathbf{o}_u^{(\upsilon)}=\mathbf{o}_u$.
		\REPEAT
		\STATE Solve convex problem (\text{P2.}$\upsilon$) with given UAV-CS association matrix, beamforming matrix and rate allocation vector for UAV location.
		\STATE Denote the optimal solution of (\text{P2.}$\upsilon$) by $ \mathbf{o}_u^{(\upsilon+1)} $,
		\STATE and set $ \upsilon = \upsilon + 1;$
		\UNTIL{the objective value (\text{P2.}$\upsilon$) converges.}
	\end{algorithmic}
\end{algorithm}
\subsection{Beamforming and Rate Allocation Optimization}\label{BeamformingandRateAllocationOptimization}
Now we optimize the UAV beamforming matrixes $\left\{\mathbf{P}_u\right\}_{u=1}^{U}$ and common rate allocation variables $\left\{C_k\right\}_{k=1}^{K}$ with the updated UAV-CS association cluster and the location of the UAV. By introducing $\hat R_k^p$ as the rate slack variable and letting $\mathbf{P}_{u,c}=\mathbf{p}_{u,c}\mathbf{p}_{u,c}^H$, $\mathbf{P}_{u, k,p}=\mathbf{p}_{u, k,p}\mathbf{p}_{u, k,p}^H$, $\mathbf{P}_{u,r}= \mathbf{p}_{u,r} \mathbf{p}_{u,r}^H$, $\mathbf{H}_{u, k}=\mathbf{h}_{u, k}\mathbf{h}_{u, k}^H$, the optimization problem can be rewritten as
\begin{subequations}\label{P4}
	\begin{align}
		&(\text{P3}):\max _{ \left\{\mathbf{P}_{u,c}\right\}_{u=1}^{U},\left\{\mathbf{P}_{u, k,p}\right\}_{k=1}^{K},\atop \left\{\mathbf{P}_{u,r}\right\}_{u=1}^{U} \left\{C_k\right\}_{k=1}^{K},\left\{\hat R_k^p\right\}_{k=1}^{K} } \sum_{k \in \mathcal{K}} \mu_k\left(C_k+\hat R_k^p\right)\\
		&\mbox{s.t.}\quad
		 \label{4st0}C_k\geq\text{0},\forall k\in\mathcal{K},\\
		\label{4st1}&\resizebox{0.95\hsize}{!}{$\begin{aligned} B\log _2\left(1+\frac{\sum_{u \in \mathcal{U}}\operatorname{tr}\left(\mathbf{H}_{u, k}\mathbf{P}_{u,c}\right)}{ \sum_{u \in \mathcal{U}} \sum_{j \in \mathcal{J}_u} \operatorname{tr}\left(\mathbf{H}_{u, k}\mathbf{P}_{u, j,p}\right)+\sigma^2}\right) \geq \sum_{i \in \mathcal{K}} C_{i},\forall k \in \mathcal{K}
	 	\end{aligned}$}, \\
		\label{4st2}&\resizebox{0.95\hsize}{!}{$\begin{aligned} B\log _2\left(1+\frac{\operatorname{tr}\left(\mathbf{H}_{u, k}\mathbf{P}_{u,k,p}\right)}{\sum_{i \in \mathcal{U} } \sum_{j \in \mathcal{J}_i \backslash \{k\}}\operatorname{tr}\left(\mathbf{H}_{i, k}\mathbf{P}_{i, j,p}\right)+\sigma^2}\right) \geq \hat R_k^p,\forall k \in \mathcal{K}
		\end{aligned}$} ,\\
		\label{4st3}& \operatorname{tr}\left(\mathbf{P}_{u,c}+ \sum_{k=1}^K\mathbf{P}_{u, k,p}+\mathbf{P}_{u,r}\right) \leq P_{max},\forall u\in\mathcal{U},\\
		\label{4st4}&\mathbf{P}_{u,c} \succeq 0, \mathbf{P}_{u, k,p} \succeq 0, \mathbf{P}_{u,r} \succeq 0, \forall u \in \mathcal{U},\forall k \in \mathcal{K}, \\
		\label{4st5}&
		\resizebox{0.95\hsize}{!}{$\begin{aligned} \operatorname{rank}\left(\mathbf{P}_{u,c}\right)=1, \operatorname{rank}\left(\mathbf{P}_{u, k,p}\right)=1,\operatorname{rank}\left(\mathbf{P}_{u,r}\right)=1, \forall u \in \mathcal{U}, \forall k \in \mathcal{K}
			\end{aligned}$} , 
		 \\
		\label{4st6}&	\resizebox{0.95\hsize}{!}{$\begin{aligned}\sum_{u \in \mathcal{U}} \frac{\operatorname{tr}\left(\left(\mathbf{P}_{u,c}+ \sum_{k=1}^K\mathbf{P}_{u, k,p}+\mathbf{P}_{u,r}\right) \mathbf{A}\left(\mathbf{o}_u, \mathbf{q}_0\right)\right)}{r^2\left(\mathbf{o}_u, \mathbf{q}_0\right)}  \geq \frac{ r^2\left(\mathbf{o}_0, \mathbf{q}_0\right) \sigma^2 \bar{\gamma}}{\left|\beta_0\right|}	\end{aligned}$},\\
		\label{4st7}&C_k+\hat R_k^p\geq R_k^{th}\text{,}\forall k\in\mathcal{K}.
	\end{align}
\end{subequations}
Problem (\text{P3}) is an SDP problem, in which the constraints \eqref{4st1}, \eqref{4st2}, and \eqref{4st5} are non-convex.  

To solve problem (P3), we first introduce slack variables $\left\{\eta_{ k}\right\}_{k=1}^K$ and $\left\{\rho_{ k}\right\}_{k=1}^K$ \cite{Yin2022}, then the constraint \eqref{4st1} can be transformed as
\begin{align}
	\label{eta-rhoc}
	&\eta_{ k}-\rho_{k} \geq \sum_{i \in \mathcal{K}} C_i \frac{\log 2}{B} , \forall k \in \mathcal{K},\\
	\label{eta-c}
	&\resizebox{1\hsize}{!}{$\begin{aligned}
			e^{\eta_{ k}} \leq \sum_{u \in \mathcal{U}}\operatorname{tr}\left(\mathbf{H}_{u, k}\mathbf{P}_{u,c}\right)+ \sum_{u \in \mathcal{U}} \sum_{j \in \mathcal{J}_u} \operatorname{tr}\left(\mathbf{H}_{u, k}\mathbf{P}_{u, j,p}\right)+\sigma^2, \forall k \in \mathcal{K},
		\end{aligned}$}\\
	\label{rho-c}
	&e^{\rho_{ k}} \geq \sum_{u \in \mathcal{U}} \sum_{j \in \mathcal{J}_u} \operatorname{tr}\left(\mathbf{H}_{u, k}\mathbf{P}_{u, j,p}\right)+\sigma^2, \forall k \in \mathcal{K}.
\end{align}
Similarly, by introducing slack variables $\left\{\chi_k\right\}_{k=1}^K,\left\{\zeta_k\right\}_{k=1}^K$,  \eqref{4st2} can be rewritten as
\begin{align}
	\label{chi-zeta}
	&\chi_k-\zeta_k \geq \hat R_k^p \frac{\log 2}{B}, \forall k \in \mathcal{K},\\
	\label{chi}
	&e^{\chi_k} \leq  \sum_{i \in \mathcal{U} } \sum_{j \in \mathcal{J}_i } \operatorname{tr}\left(\mathbf{H}_{i, k}\mathbf{P}_{i, j,p}\right)+\sigma^2, \forall k \in \mathcal{K},\\
	\label{zeta}
	 &e^{\zeta_k} \geq \sum_{i \in \mathcal{U} } \sum_{j \in \mathcal{J}_i \backslash \{k\}}\operatorname{tr}\left(\mathbf{H}_{i, k}\mathbf{P}_{i, j,p}\right)+\sigma^2, \forall k \in \mathcal{K}.
\end{align}
It can be observed that the constraints \eqref{rho-c} and \eqref{zeta} are still non-convex. To deal with it, at the $\kappa$-th step, we first approximate the LHS of \eqref{rho-c} based on their first-order Taylor expansion at local point $\rho_{ k}^{(\kappa)}$, and accordingly re-express \eqref{rho-c} as
\begin{equation}\label{Taylor-rhoc}
	\resizebox{1\hsize}{!}{$\begin{aligned}
	e^{\rho_{k}^{(\kappa)}}\left(1+\rho_{k}-\rho_{ k}^{(\kappa)}\right)\geq \sum_{u \in \mathcal{U}} \sum_{j \in \mathcal{J}_u} \operatorname{tr}\left(\mathbf{H}_{i, k}\mathbf{P}_{i, j,p}\right)+\sigma^2, \forall k \in \mathcal{K}.
\end{aligned}$}
\end{equation}
Similarly, \eqref{zeta} can be re-expressed as
\begin{align}\label{Taylor-zeta}
	\resizebox{1\hsize}{!}{$\begin{aligned}
			e^{\zeta_{ k}^{(\kappa)}}\left(1+\zeta_{ k}-\zeta_{ k}^{(\kappa)}\right) \geq \sum_{i \in \mathcal{U} } \sum_{j \in \mathcal{J}_i \backslash \{k\}}\operatorname{tr}\left(\mathbf{H}_{i, k}\mathbf{P}_{i, j,p}\right)+\sigma^2, \forall k \in \mathcal{K}.
		\end{aligned}$}
\end{align}
Next, we deal with the non-convex rank constraints in \eqref{4st5} via the idea of SDR. In particular, we relax the rank constraints in \eqref{4st5} and (\text{P3}) can be reformulated as
	\begin{align}
	\label{PSDR4}
		(\text{P3.SDR}):&\max _{\substack{\left\{\mathbf{P}_{u,c}\right\}_{u=1}^{U},\left\{\mathbf{P}_{u, k,p}\right\}_{k=1}^{K},\\ \left\{\mathbf{P}_{u,r}\right\}_{u=1}^{U} \left\{C_k\right\}_{k=1}^{K}, \left\{\hat R_k^p\right\}_{k=1}^{K},\\\left\{\eta_{ k}\right\}_{k=1}^K,\left\{\rho_{ k}\right\}_{k=1}^K,\left\{\chi_{ k}\right\}_{k=1}^K,\left\{\zeta_{ k}\right\}_{k=1}^K} } \sum_{k \in \mathcal{K}} \mu_k\left(C_k+\hat R_k^p\right)\\
		\text{s.t.}\quad 
		&\eqref{4st0},\eqref{4st6},\eqref{4st7},\eqref{eta-rhoc},\eqref{eta-c},\eqref{chi-zeta},\nonumber\\&\eqref{chi},\eqref{Taylor-rhoc},\eqref{Taylor-zeta}\nonumber
	\end{align}
Note that problem (\text{P3. SDR}) is a convex SDP and thus can be solved by CVX. It is worth noting that  the ranks of solution $\left\{\mathbf{P}_{u,c}\right\}_{u=1}^{U},\left\{\mathbf{P}_{u, k,p}\right\}_{k=1}^{K}, \left\{\mathbf{P}_{u,r}\right\}_{u=1}^{U}$ to (\text{P3}) are equal to 1, and can be obtained via eigenvalue decomposition (EVD). Otherwise, the Gaussian randomization method is utilized to construct sub-optimal rank-one solution \cite{ZLSDR2010}. 

\subsection{Complexity Analysis}\label{ComplexityAnalysis}
In this subsection, we analyze the complexity of the proposed algorithms. The complexity for solving problem \text{(P0)} is mainly determined by the complexity of solving problem \text{(P1)}, \text{(P2)}, and (\text{P3}). Specifically, problem \text{(P1)} is solved by K-Means algorithm, and the complexity of solving problem \text{(P1)} is $\mathcal{O}\left(2KU\epsilon_1\right)$ \cite{MacQueen1967}, where $\epsilon_1$ is the accuracy of the K-Means algorithm for solving problem \text{(P1)}. Next, problem \text{(P2)} is solved by the SCA technique, and transformed as problem (\text{P2.}$\upsilon$) at each iteration. There are $(K+1)(U+1)-1$ constraints in problem (\text{P2.}$\upsilon$), so the number of iterations that are required for the SCA technique is $\mathcal{O}\left(\sqrt{(K+1)(U+1)-1} \log _2\left(1 / \epsilon_2\right)\right)$ \cite{Grant2018}, where $\epsilon_2$ is the accuracy of the SCA technique for solving problem (\text{P2.}$\upsilon$). At each iteration, the complexity of solving problem (\text{P2.}$\upsilon$) is $\mathcal{O}\left(S_1^2 S_2\right)$, where $S_1=U(K+1)$ and $S_2=(K+1)(U+1)-1$ are the total numbers of the variables and constraints, respectively. Thus, the total complexity of the SCA technique for
solving problem \text{(P2)} is $\mathcal{O}\left(U^{3.5} K^{3.5} \log _2\left(1 / \epsilon_2\right)\right)$. In a similar manner, the computational complexity of problem (\text{P3})
is $\mathcal{O}\left({(N_t^2UK)}^{3.5}\log _2\left(1 / \epsilon_3\right)\right)$ \cite{Yin2022}, where $\epsilon_3$ is the accuracy of the SCA technique for solving problem (\text{P3}). As a result, the total complexity for solving problem \text{(P0)} is $\resizebox{1\hsize}{!}{$\begin{aligned}
		\mathcal{O}\left(2KU\epsilon_1+T U^{3.5} K^{3.5} \log _2\left(1 / \epsilon_2\right)+T{(N_t^2UK)}^{3.5} \log _2\left(1 / \epsilon_3\right)\right)
	\end{aligned}$}$ \cite{YLi2021}, where $T$ is the number of iterations for the proposed CoRSMA-ISAC scheme.

\section{Simulation Results}\label{SimulationResults}
In this section, we provide simulations to demonstrate the effectiveness of the proposed CoRSMA-ISAC scheme. We consider a 2D disaster area of $500\ \text{m}\times 500\ \text{m} $ where TS is located at the center of the area, i.e., $\mathbf{q}_0=(250,250)$, and CSs uniformly distributed in such an area. All UAVs are assumed to fly at a fixed altitude $H_u=H_0=100\ \text{m}$ with $N_t=8$ antenna elements. Unless otherwise stated, the maximum transmit power of each UAV is fixed at $P_{max}=25\ \text{dBm}$, and the reference channel power is $\varepsilon_0= -60\ \text{dB}$. The reference sensing channel power is $\beta_0=-50\ \text{dB}$. The noise power is assumed to be $\sigma^2=-110\ \text{dBm}$ \cite{WL2018} and the channel bandwidth is $B=1\ \text{MHz}$. Finally, the threshold for the total achievable rate of each CS is set as $R_k^{th}=1 \ \text{Mbps}$, the sensing SNR threshold is set as $\bar\gamma=2$ \cite{demirhan2023cellfree}, and the rate weight is assumed to $\mu_k=1/K, \forall k\in \mathcal{K}$. 
%
%
\begin{figure*}[!htb]
	\centering
	\subfigure[UAV-CS association and UAV initial deployment]
	{
		\begin{minipage}[b]{0.32\linewidth}
			\centering
			\includegraphics[width=1.95in]{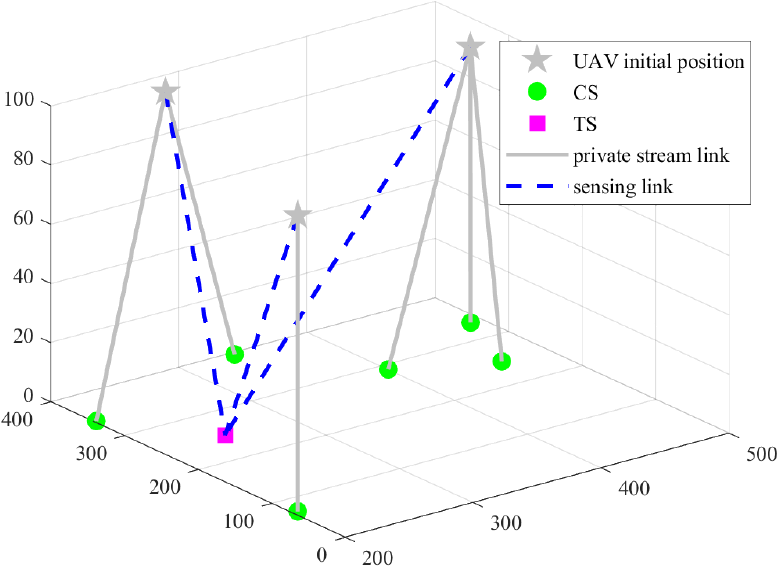}
			\label{kmeansdeploy}
		\end{minipage}%
	}
	\subfigure[ UAV optimized deployment]
	{
		\begin{minipage}[b]{0.32\linewidth}
			\centering
			\includegraphics[width=2in]{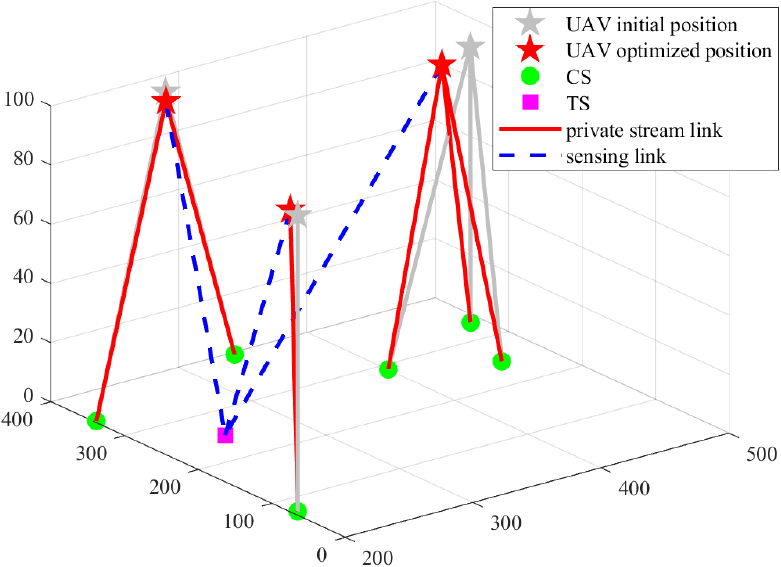}
			\label{deployment}
		\end{minipage}
	}
	\subfigure[Top view of UAV deployment process]
	{
		\begin{minipage}[b]{0.3\linewidth}
			\centering
			\includegraphics[width=1.65in]{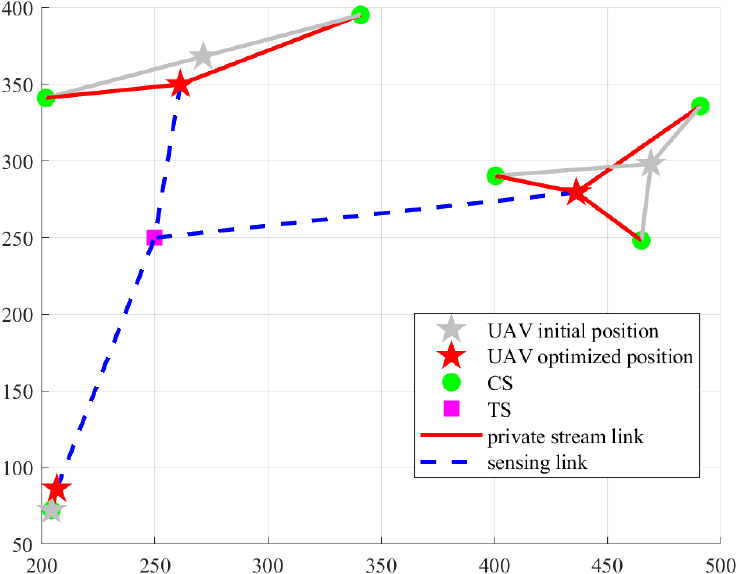}
			\label{deployment2D}
		\end{minipage}
	}
	\caption{The procedure of UAV deployment and UAV-CS association, where $U = 3$ and $K = 6$ are fixed.}
	\label{deployments}
\end{figure*}

\begin{figure}[h]
	\center
	\includegraphics[width=3.2in]{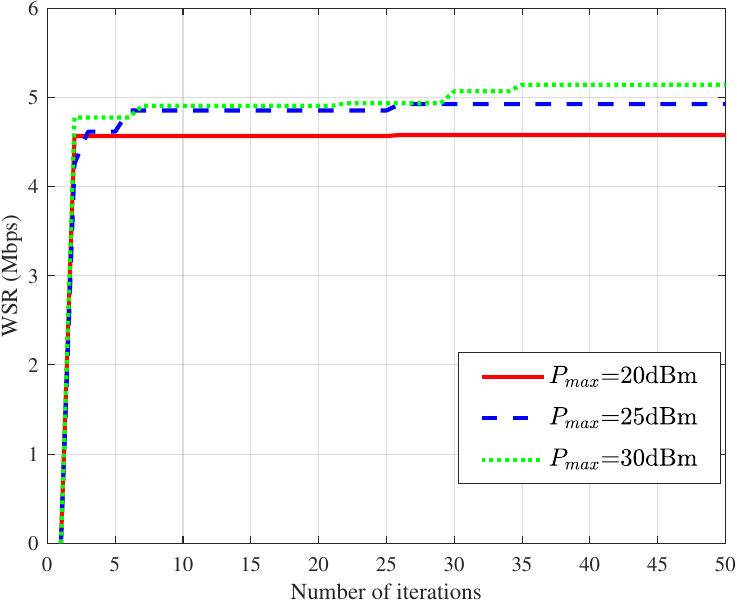}\\
	\caption{The convergence behaviour of the proposed CoRSMA-ISAC scheme with different $P_{max}$, where $U=3$ and $K=5$ are fixed.}\label{convergence}
\end{figure}

Fig. \ref{deployments} shows the procedure of UAV-CS association and UAV deployment by our proposed CoRSMA-ISAC scheme. The number of CSs is set to $K=6$. In Fig. \ref{kmeansdeploy}, it can be observed that all UAVs first obtain the UAV-CS association and the initial UAV deployment based on \eqref{kmeans-J} and \eqref{kmeans-o} by K-Means algorithm. Next, Fig. \ref{deployment} depicts that with the initialized UAV deployment, the optimal UAV deployment is finally obtained after several iterations of the proposed CoRSMA-ISAC scheme. Fig. \ref{deployment2D} shows a top view of the variation of UAV deployment, from which it can be seen that compared to the initial position of UAVs, their optimal position is getting closer to each other and surrounding the TS compared. This is because that each UAV transmits the common message to serve not only its associated CSs but also other unassociated CSs, while meeting the requirement of sensing SNR.

Next, the convergence behavior of the proposed CoRSMA-ISAC scheme is investigated in Fig. \ref{convergence}. We can clearly see that the algorithm can converge quickly to reach the maximum WSR after serveral iterations. Besides, it is expected that more transmit power budget $P_{max}$ we can utilize, a larger value of WSR is obtained. 



As a benchmark comparison, we consider the traditional ISAC design with SDMA. The corresponding communication rate in \eqref{Rp} can be expressed as
\begin{equation}\label{Rp_sdma}
	R_k^{\text{SDMA}}=B\log _2\left(1+\frac{\left|\mathbf{h}_{u, k}^H \mathbf{p}_{u, k}\right|^2}{ \sum_{i \in \mathcal{U}} \sum_{j \in \mathcal{J}_i \backslash \{k\}} \left|\mathbf{h}_{i, k}^H \mathbf{p}_{ i, j}\right|^2+\sigma^2}\right).
\end{equation}

We also consider the traditional ISAC design with NOMA where each ISAC UAV divides the available spectrum equally and communicates with its associated CSs via NOMA. According to the NOMA scheme, SIC is employed to decode the inter-CS interference. Without loss of generality, we reset the CS associated with the $u$th UAV index as $\mathcal{J}_u=\{k_{u-1}+1, \cdots, k_u\}$, with $k_{0}=1$, $k_{U}=K$. Simultaneously, we assume that $\left\|\mathbf{h}_{u, k_{u-1}+1}\right\|<\cdots<\left\|\mathbf{h}_{u, k_u}\right\|$. Thus, the rate of the CS $k\in \{k_{u-1}+1, \cdots, k_u-1\}$ can be expressed as
\begin{equation}\label{Rp_noma1}
	R_k^{\text{NOMA}}=\frac{B}{U}\cdot\log _2\left(1+\frac{\left|\mathbf{h}_{u, k}^H \mathbf{p}_{u, k}\right|^2}{ \sum_{i=k+1}^{ k_u} \left|\mathbf{h}_{u, k}^H \mathbf{p}_{u, i}\right|^2+\sigma^2}\right).
\end{equation}
For the CS $k_u,\forall u\in \mathcal{U}$, the rate is given by
\begin{equation}\label{Rp_noma2}
	R_{k_u}^{\text{NOMA}}=\frac{B}{U}\cdot\log _2\left(1+\frac{\left|\mathbf{h}_{u, k_u}^H \mathbf{p}_{u, k_u}\right|^2}{\sigma^2}\right).
\end{equation}

Moreover, we also consider the traditional ISAC design with OMA where CSs use separate spectrum and divide the spectrum equally for transmission. The corresponding communication rate in \eqref{Rp} can be expressed as
\begin{equation}\label{Rp_oma}
	R_k^{\text{OMA}}=\frac{B}{K}\cdot\log _2\left(1+\frac{\left|\mathbf{h}_{u, k}^H \mathbf{p}_{u, k}\right|^2}{\sigma^2}\right).
\end{equation}

\begin{figure}[t]
	\center
	\includegraphics[width=3.2in]{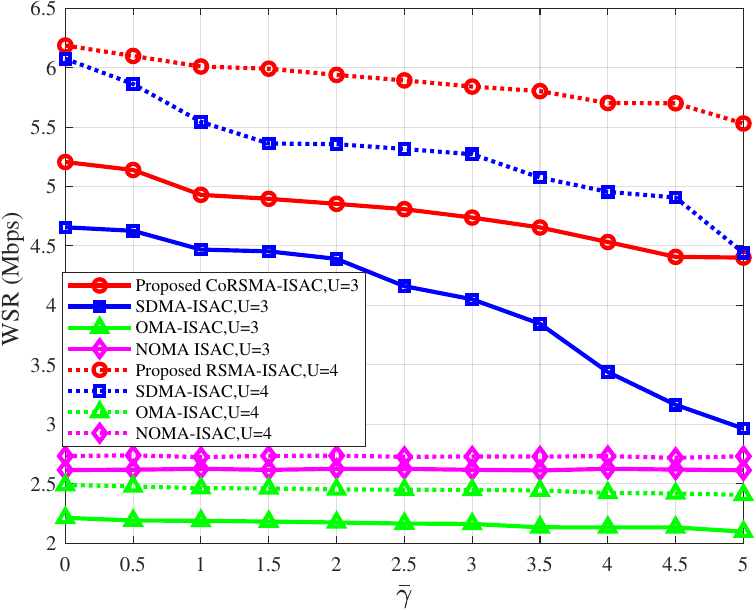}\\
	\caption{WSR with different sensing SNR requirement $\bar{\gamma}$, where $K=5$ is fixed. }\label{CompareRateRs}
\end{figure}
\begin{figure}[t]
	\center
	\includegraphics[width=3.15in]{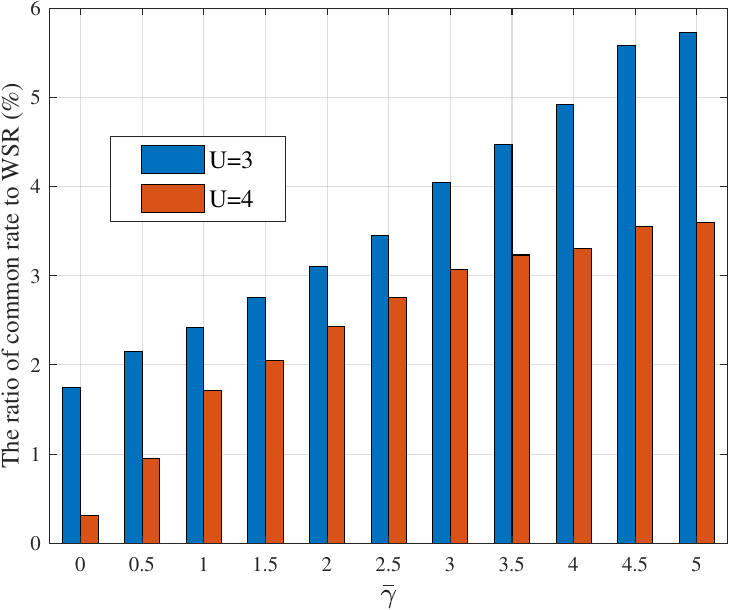}\\
	\caption{The ratio of common rate to WSR in CoRSMA-ISAC scheme with different sensing SNR requirement $\bar{\gamma}$, where $K=5$ is fixed.}\label{rate_c_with_rs}
\end{figure}
It can be seen in Fig. \ref{CompareRateRs} that under the same sensing SNR constraint, if there are more UAVs transmit common message, the WSR will increase since performance of the interference management is enhanced with limited spectrum. Furthermore, Fig. \ref{CompareRateRs} illustrates a trade-off between communication and sensing performance. It can be observed that with the increase of sensing requirements, the WSR decreases. It is expected because when the sensing requirement is higher, more power is required to be allocated for sensing, then the power used for communication is relatively reduced and the WSR decreases. Besides, as $\bar\gamma$ increases, CoRSMA-ISAC can achieve better communication performance compared to SDMA-ISAC, NOMA-ISAC, and OMA-ISAC. This is because that sensing also benifits from the beamforming gain of common message transmission by different UAVs with CoRSMA-ISAC. Compared to CoRSMA-ISAC and SDMA-ISAC, NOMA-ISAC performs worse due to the inefficient use of SIC layers in multi-antenna NOMA, which lowers the sum DoF compared to CoRSMA-ISAC and SDMA-ISAC \cite{BC2021}. Specifically, due to the rate threshold requirement, NOMA-ISAC decreases the power allocated to the CS with better channel gain in order to ensure fairness for CSs with worse channel gain, resulting in a worse WSR. On the other hand, because of non-coordination of UAV, UAVs transmit with orthogonal spectrum, then the reduction of transmission spectrum resources further leads to a decrease in WSR. Fig. \ref{rate_c_with_rs} shows that the ratio of common rate to WSR increases as  $\bar\gamma$ increases, which also verifies that the common message transmission will help sensing in the proposed CoRSMA-ISAC. Morever, when the number of UAV increases, CoRSMA-ISAC reduces the ratio of common rate to achieve a higher WSR. This is because that UAV-CS assoication is changed, each CS can choose more proper UAV for private message transmission.


\begin{figure}[t]
	\center
	\includegraphics[width=3.2in]{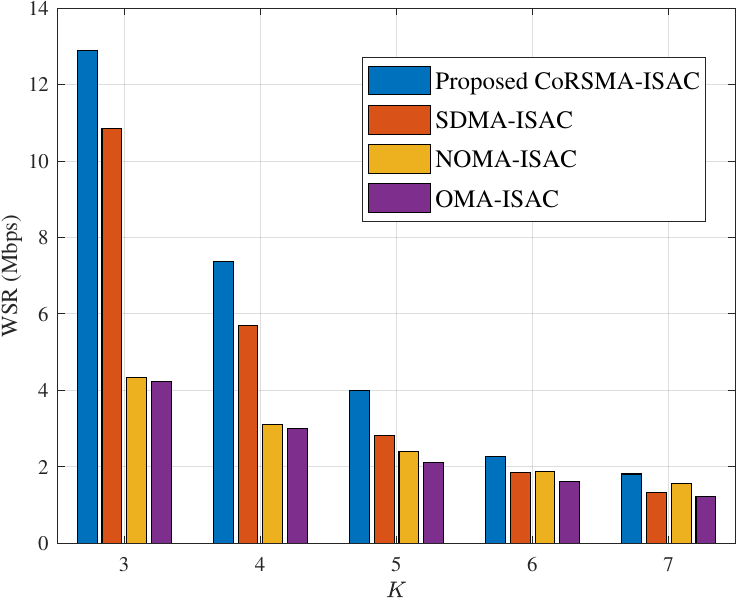}\\
	\caption{WSR versus different number of CSs $K$ , where $U=3$ is fixed. }\label{CSchange}
\end{figure}
\begin{figure}[t]
	\center
	\includegraphics[width=3.2in]{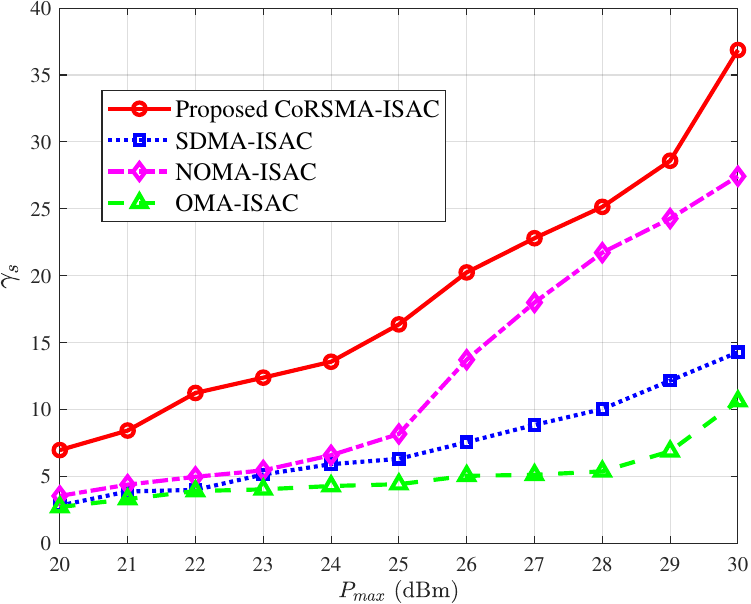}\\
	\caption{Sensing SNR with different maximum transmit power of each UAV $P_{max}$, where $U=3$ and $K=5$ are fixed. }\label{RsWithPtotal}
\end{figure}
On the other hand, we compare WSR with different number of CSs. As shown in Fig. \ref{CSchange}, as the number of CSs increases, more CSs will use the same spectrum for transmission, and  WSR will decreases due to the multi-user intra and intercell interference. However, compared to other schemes, CoRSMA-ISAC with flexible decoding rules can mitigate the decrease of WSR caused by spectrum congestion by increasing the power allocation of common message.

\begin{figure}[t]
	\center
	\includegraphics[width=3.2in]{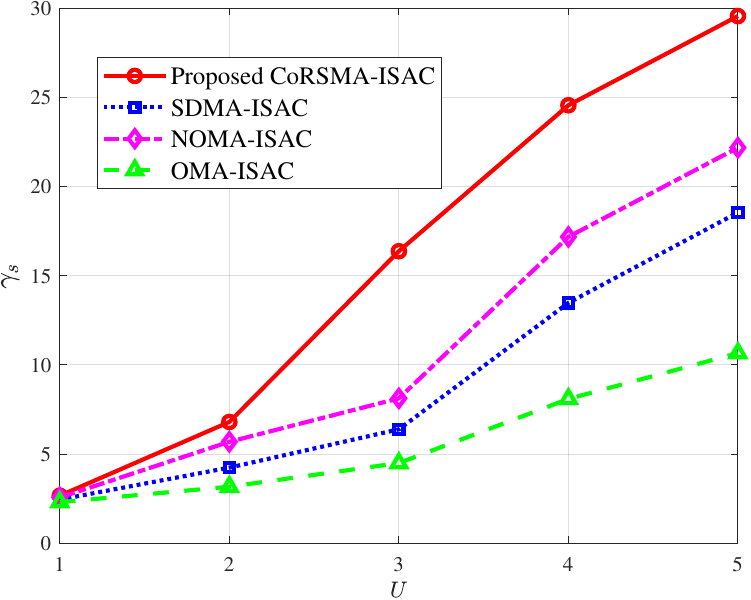}\\
	\caption{Sensing SNR with different number of ISAC UAV $U$ where $K=5$ is fixed. }\label{RswithUAV_line}
\end{figure}
We further discuss the sensing performance of the proposed CoRSMA-ISAC scheme. From Fig. \ref{RsWithPtotal}, it can be seen that CoRSMA-ISAC also achieve a higher sensing SNR with the same $P_{max}$ compared to OMA-ISAC, NOMA-ISAC and SDMA-ISAC. Specifically, when the $P_{max}$ is large enough, such an improvement is more significant. Similar to Fig. \ref{rate_c_with_rs}, it is still due to the benefits of common message transmission on sensing. Besides, Fig. \ref{RswithUAV_line} shows that cooperative sensing by multiple UAVs can enhance the sensing SNR compared to simple single UAV sensing ($U=1$). Such a sensing SNR enhancement is more significant when the number of UAVs is larger.

\section{Conclusion}\label{Conclusion}
In this paper, we have investigated a joint UAV-CS association, UAV deployment, and beamforming optimization problem for a coordinated RSMA for ISAC in emergency UAV system. We presented the communication and sensing signal model, and formulated an optimization problem to maximize the WSR of the system while satisfying the sensing SNR constraint. To solve the formulated non-convex optimization problems, we proposed an efficient algorithm based on K-Means, SCA, and SDR techniques. The numerical results indicated that the proposed CoRSMA-ISAC scheme can achieve higher WSR while satisfying sensing SNR constraints in spectrum congestion. Moreover, the proposed CoRSMA-ISAC scheme also guarantees a higher sensing SNR under the same maximum transmit power constraint. In the future work, we will consider more practical scenarios in the emergency events, such as the mobility of communication users and targets, the existence of clutters, 3D UAV deployment, etc.


\appendix
\subsection{Derivation of Sensing SNR in \eqref{rs}}\label{proof-rs}
First, the numerator of \eqref{rs} can be simplified as \cite{demirhan2023cellfree}
\begin{align}\label{E-GS}
	&\resizebox{1\hsize}{!}{$\begin{aligned}
				\mathbb{E}\left[\|\mathbf{G} \mathbf{S}\|_{\mathrm{F}}^2\right]=\text{tr}\left\{\mathbb{E}\left[\mathbf{S}\mathbf{S}^H\mathbf{G}^H\mathbf{G}\right]\right\}=\text{tr}\left\{\underbrace{\mathbb{E}\left[\mathbf{S}\mathbf{S}^H\right]}_{L\cdot\mathbf{I}} \mathbb{E} \left[\mathbf{G}^H\mathbf{G}\right]\right\}
		\end{aligned}$}		\nonumber\\
	&\resizebox{0.89\hsize}{!}{$\begin{aligned}=L\sum_{u \in \mathcal{U}}\left( \text{tr}\left\{\mathbb{E} \left[\bar{\mathbf{G}}_u^H\bar{\mathbf{G}}_u\right]\right\} + \sum_{u' \in \mathcal{U}\backslash\{u\}}\text{tr}\left\{\mathbb{E} \left[\bar{\mathbf{G}}_u^H\bar{\mathbf{G}}_{u'}\right]\right\}\right)
		\end{aligned}$}	
\end{align}
Due to the expectation over $\beta_u$ and independence of these random variables, we have $\mathbb{E} \left[\bar{\mathbf{G}}_u^H\bar{\mathbf{G}}_{u'}\right]=0$. Further, we have
\begin{align}\label{E-GG}
	&\text{tr}\left\{\mathbb{E} \left[\bar{\mathbf{G}}_u^H\bar{\mathbf{G}}_u\right]\right\}=\nonumber\\
		&	\resizebox{1\hsize}{!}{$\begin{aligned}\text{tr}\left\{\mathbb{E} \left[\left|\beta_u\right|^2 \mathbf{a}\left(\mathbf{o}_u,\mathbf{q}_0\right)\mathbf{a}^H\left(\mathbf{o}_u,\mathbf{q}_0\right)\mathbf{P}_u\mathbf{P}_u^H \underbrace{\mathbf{b}\left(\mathbf{o}_0,\mathbf{q}_0\right) \mathbf{b}^H\left(\mathbf{o}_0,\mathbf{q}_0\right) }_{N_r}\right]\right\}
	\end{aligned}$}\nonumber\\
&=\frac{\left|\beta_0\right|N_r}{r^2\left(\mathbf{o}_u,\mathbf{q}_0\right)r^2\left(\mathbf{o}_0,\mathbf{q}_0\right) }\|\mathbf{a}^H\left(\mathbf{o}_u,\mathbf{q}_0\right) \mathbf{P}_u\|^2.                            
\end{align}
Similarly, we have 
\begin{equation}\label{E-N}
	\mathbb{E}\left[\|\mathbf{N}\|_{\mathrm{F}}^2\right]=L N_r \sigma^2.
\end{equation}
With \eqref{E-GS}, \eqref{E-GG}, and \eqref{E-N}, \eqref{rs} can be obtained. This completes the proof.
\subsection{Proof of Proposition \ref{prop1}} \label{proof1}
By following \eqref{rp} and \eqref{st6}, we have \eqref{prof-rth} at the top of the next page,
\begin{figure*}
	\begin{equation}\label{prof-rth}
		\frac{\left|\mathds{1}\mathbf{p}_{u,k,p}\right|^2}{\sum_{j\in\mathcal{J}_u\setminus\{k\}}\left|\mathds{1}\mathbf{p}_{u,j,p}\right|^2+\sum_{i\in\mathcal{U}\setminus\{u\}}\sum_{j\in\mathcal{J}_i}\frac{r^2\left(\mathbf{o}_u,\mathbf{q}_k\right)}{r^2\left(\mathbf{o}_i,\mathbf{q}_k\right)}\left|\mathds{1}\mathbf{p}_{i,j,p}\right|^2+\sigma^2 \varepsilon_0^{-1}r^2\left(\mathbf{o}_u,\mathbf{q}_k\right)}\geq2^{R_k^{th}-C_k}-1,
	\end{equation}
	\hrulefill
\end{figure*}
which is further arranged to get
\begin{align}\label{pro1-r}
	&\frac{\left|\mathds{1}\mathbf{p}_{u,k,p}\right|^2}{2^{\frac{R_k^{th}-C_k}{B}}-1}-\sum_{j\in\mathcal{J}_u\setminus\{k\}}\lvert\mathds{1}\mathbf{p}_{u,j,p}\rvert^2\geq\nonumber\\ &r^2\left(\mathbf{o}_u,\mathbf{q}_k\right)(\sum_{i\in\mathcal{U}\setminus\{u\}}\sum_{j\in\mathcal{J}_i}\frac1{r^2\left(\mathbf{o}_i,\mathbf{q}_k\right)}\lvert\mathds{1}\mathbf{p}_{i,j,p}\rvert^2+\sigma^2\varepsilon_0^{-1}).
\end{align}
Since \eqref{pro1-r} is still non-convex with respect to $r^2\left(\mathbf{o}_u,\mathbf{q}_k\right)$, we use slack variables  $\{\hat{r}_{i,k}\leq r^2\left(\mathbf{o}_i,\mathbf{q}_k\right),\forall k \in \mathcal{K}, k\notin \mathcal{ J }_i \}$ to replace $r^2\left(\mathbf{o}_i,\mathbf{q}_k\right)$ in \eqref{pro1-r}. Then we have
\begin{align}\label{proof1-1}
	&\frac{\left|\mathds{1}\mathbf{p}_{u,k,p}\right|^2}{2^{\frac{R_k^{th}-C_k}{B}}-1}-\sum_{j\in\mathcal{J}_u\setminus\{k\}}\lvert\mathds{1}\mathbf{p}_{u,j,p}\rvert^2\geq\nonumber \\&r^2\left(\mathbf{o}_u,\mathbf{q}_k\right)(\underbrace{\sum_{i\in\mathcal{U}\setminus\{u\}}\sum_{j\in\mathcal{J}_i}\frac1{\hat{r}_{i,k}}\lvert\mathds{1}\mathbf{p}_{i,j,p}\rvert^2+\sigma^2\varepsilon_0^{-1}}_{\Psi}).
\end{align}
Notice that $r^2\left(\mathbf{o}_u,\mathbf{q}_k\right)=\left\|\mathbf{o}_u-\mathbf{q}_k\right\|^2+{H_u}^2$, then \eqref{proof1-1} can be re-expressed as 
\begin{align}
	r^2\left(\mathbf{o}_u,\mathbf{q}_k\right)&=\left\|\mathbf{o}_u-\mathbf{q}_k\right\|^2+{H_u}^2\nonumber\\
	&\leq \frac{1}{\Psi}\left(\frac{\left|\mathds{1} \mathbf{p}_{u, k,p}\right|^2}{2^{\frac{R_k^{t h}-C_k}{B}}-1}-\sum_{j \in \mathcal{J}_u\setminus\{k\}}\left|\mathds{1}  \mathbf{p}_{u, j,p}\right|^2\right).
\end{align}
Further, we have the convex constraint expressed as
\begin{align}
	\resizebox{0.89\hsize}{!}{$\begin{aligned}
	\left\|\mathbf{o}_u-\mathbf{q}_k\right\|^2 \leq \frac{1}{\Psi}\left(\frac{\left|\mathds{1} \mathbf{p}_{u, k,p}\right|^2}{2^{\frac{R_k^{t h}-C_k}{B}}-1}-\sum_{j \in \mathcal{J}_u\setminus\{k\}}\left|\mathds{1}  \mathbf{p}_{u, j,p}\right|^2\right)-{H_u}^2.
\end{aligned}$}
\end{align}
This completes the proof.
\subsection{Proof of Proposition \ref{prop2}}\label{proof2}
The entry in the $p$-th row and $q$-th column of $\mathbf{A}\left(\mathbf{o}_u, \mathbf{q}_0\right)$ can by expressed as
\begin{equation}\label{Apq}
	\left[\mathbf{A}\left(\mathbf{o}_u, \mathbf{q}_0\right)\right]_{p, q}=e^{j \pi (p-q) \frac{H_u}{r\left(\mathbf{o}_u, \mathbf{q}_0\right)}}.
\end{equation}
It is observed from \eqref{Apq} that $\mathbf{R}_u $ and $\mathbf{A}\left(\mathbf{o}_u, \mathbf{q}_0\right)$ are Hermitian matrices \cite{Lyu2023jointmaneuver}, and thus we have 
\begin{align}
	\label{tr-PA}
	\operatorname{tr}\left(\mathbf{R}_u \mathbf{A}\left(\mathbf{o}_u, \mathbf{q}_0\right)\right)
	 =&\sum_{p=1}^{N_t} \sum_{q=1}^{N_t}\left[\mathbf{R}_u\right]_{p, q} \mathrm{e}^{j \pi (q-p)\frac{H_u}{r\left(\mathbf{o}_u, \mathbf{q}_0\right)}} \nonumber\\
	 =&\sum_{a=1}^{N_t}\left[\mathbf{R}_u\right]_{a, a}+2 \sum_{p=1}^{N_t} \sum_{q=p+1}^{N_t}\left|\left[\mathbf{R}_u\right]_{p, q}\right|\nonumber\\
	  &\times \cos \left(\theta_{p, q}^{\mathbf{R}_u}+ \pi (q-p) \frac{H_u}{r\left(\mathbf{o}_u, \mathbf{q}_0\right)}\right).
\end{align}
Then the first-order derivative of \eqref{tr-PA} with respect to $\mathbf{o}_u$ is derived as
\begin{align}\label{tr-PA-dao}
	&F\left(\mathbf{R}_u,\mathbf{o}_u, \mathbf{q}_0\right)=\frac{\partial \operatorname{tr}\left(\mathbf{R}_u \mathbf{A}\left(\mathbf{o}_u, \mathbf{q}_0\right)\right)}{\partial \mathbf{o}_u}\nonumber\\
	=&\resizebox{0.95\hsize}{!}{$\begin{aligned}2 \pi \sum_{p=1}^{N_t} \sum_{q=p+1}^{N_t}\left|\left[\mathbf{R}_u\right]_{p, q}\right|
	\times\sin \left[\theta_{p, q}^{\mathbf{R}_u}+ \pi (q-p) \frac{H_u}{r\left(\mathbf{o}_u,\mathbf{q}_0\right)}\right]\end{aligned}$}\nonumber\\
	&\times \frac{(q-p)H_u \left(\mathbf{o}_u-\mathbf{q}_0\right)}{r^3\left(\mathbf{o}_u,\mathbf{q}_0\right)}.
\end{align}
Then $F\left(\mathbf{R}_u,\mathbf{o}_u^{(\upsilon)}, \mathbf{q}_0\right)=\left.\frac{\partial\operatorname{tr}\left(\mathbf{R}_u \mathbf{A}\left(\mathbf{o}_u, \mathbf{q}_0\right)\right)}{\partial \mathbf{o}_u}\right|_{\mathbf{o}_u=\mathbf{o}_u^{(\upsilon)}}$.

With \eqref{tr-PA} and \eqref{tr-PA-dao}, the LHS of \eqref{rss} based on its first-order Taylor expansion at local point $\mathbf{o}_u^{(\upsilon)}$ can be approximated as
\begin{equation}
	\sum_{u \in \mathcal{U}}\left(H_u^{(\upsilon)}+{\mathbf{e}_u^{(\upsilon)}}^H\left(\mathbf{o}_u-\mathbf{o}_u^{(\upsilon)}\right)\right) ,
\end{equation}
where
\begin{align}
	H_u^{(\upsilon)}= \frac{\operatorname{tr}\left(\mathbf{R}_u \mathbf{A}\left(\mathbf{o}_u^{(\upsilon)}, \mathbf{q}_0\right)\right)}{r^2\left(\mathbf{o}_u^{(\upsilon)}, \mathbf{q}_0\right)},
\end{align}
\begin{align}
	\resizebox{1\hsize}{!}{$\begin{aligned}
	\mathbf{e}_u^{(\upsilon)}=
	\frac{\left[F\left(\mathbf{R}_u,\mathbf{o}_u^{(\upsilon)}, \mathbf{q}_0\right) r^2\left(\mathbf{o}_u^{(\upsilon)}, \mathbf{q}_0\right)
		-2\operatorname{tr}\left(\mathbf{R}_u \mathbf{A}\left(\mathbf{o}_u^{(\upsilon)}, \mathbf{q}_0\right)\right)\left(\mathbf{o}_u^{(\upsilon)}-\mathbf{q}_0\right)\right]}{ r^4\left(\mathbf{o}_u^{(\upsilon)}, \mathbf{q}_0\right)}.	
\end{aligned}$}
\end{align}
This completes the proof.

\end{document}